\documentclass{elsarticle}

\usepackage{amsmath}
\usepackage{amsthm}
\usepackage{amssymb}
\usepackage{graphicx}
\usepackage{verbatim}
\usepackage{natbib}
\usepackage{subfigure}
\usepackage[margin=1.35in]{geometry}
\usepackage{hyperref}

\theoremstyle{plain}
\newtheorem{prop}{Proposition}[section]

\newtheorem{lemm}[prop]{Lemma}
\newtheorem{theo}[prop]{Theorem}

\theoremstyle{definition}

\newtheorem{defi}[prop]{Definition}

\theoremstyle{remark}

\newcommand{\gh}{\hat{\gamma}}

\newcommand{\eqd}{\,{\buildrel d \over =}\,}
\newcommand{\hk}{\hat{k}}

\newcommand{\EE}{\mathbb{E}}
\newcommand{\NN}{\mathbb{N}}
\newcommand{\PP}{\mathbb{P}}
\newcommand{\RR}{\mathbb{R}}

\newcommand{\nn}{\mathcal{N}}
\newcommand{\cc}{\mathcal{C}}
\newcommand{\law}{\mathcal{L}}

\newcommand{\Var}{\operatorname{Var}}
\newcommand{\Cov}{\operatorname{Cov}}

\newcommand{\argmin}{\operatorname{argmin}}
\newcommand{\sgn}{\operatorname{sgn}}

\newcommand{\htau}{\hat{\tau}}
\newcommand{\tRpt}{Y_\tau}

\newcommand{\limc}{\lim_{c \rightarrow \infty}}

\newcommand{\limn}{\lim_{n \rightarrow \infty}}
\newcommand{\lims}{\lim_{s \rightarrow \infty}}
\newcommand{\limt}{\lim_{t \rightarrow \infty}}

\newcommand{\ABS}{We study a new estimator for the tail index of a distribution in the Fr\'echet domain of attraction that arises naturally by computing subsample maxima. This estimator is equivalent to taking a $U$-statistic over a Hill estimator with two order statistics. The estimator presents multiple advantages over the Hill estimator. In particular, it has asymptotically $\cc^\infty$ sample paths as a function of the threshold $k$, making it considerably more stable than the Hill estimator. The estimator also admits a simple and intuitive threshold selection rule that does not require fitting a second-order model.}

\begin{document}

\title{Subsampling Extremes:\\ From Block Maxima to Smooth Tail Estimation}

\author{Stefan Wager \\ Department of Statistics \\ Stanford University \\ \texttt{{swager@stanford.edu}}}
\date{\today}

\begin{abstract}

\ABS

\end{abstract}

\maketitle

\section{Introduction}
\label{sec:intro}

Researchers in multiple fields face a growing need to understand the tails of probability distributions, and extreme value theory presents tools which, under certain regularity assumptions, let us build simple yet powerful models for these tails. In the case of heavy tailed distributions, the setting of extreme value theory is as follows: suppose our data is drawn from a distribution $F$, and assume that there is a constant $\gamma > 0$ and some slowly varying function $L$ such that
\begin{equation}
1 - F(x) = L(x) \cdot x^{-\frac{1}{\gamma}}, \ \text{with} \lim_{x \rightarrow \infty} \frac{L(ax)}{L(x)} = 1 \ \text{for all} \ a > 0.
\label{eq:frechet}
\end{equation}
Then, $F$ is in what is called the Fr\'echet domain of attraction.
If $F$ satisfies this property (which most commonly used heavy-tailed distributions do), extreme value theory provides an elegant and concise description of the asymptotic properties of sample maxima of $F$. A major challenge is that this description relies on knowledge of the parameter $\gamma$, called the tail index of the distribution $F$. And, unfortunately, estimating $\gamma$ from data is not always straightforward.

The literature on tail index estimation is quite extensive. One of the most widely used estimators is due to \citet{hill1975simple}, who suggests estimating $\gamma$ with a simple functional of the top $k + 1$ order statistics of the empirical distribution:
\begin{equation}
\label{eq:hill}
\gh_H := \frac{1}{k}\sum_{j = 0}^{k-1} \log \left [\frac{X_{n - j, n}}{X_{n - k, n}}\right].
\end{equation}
Here, $X_{1, \, n} \leq \ldots \leq X_{n, \, n}$ denote the order statistics of $X$, and $k$ must be selected such that $X_{n - k, \, n} > 0$.
Hill showed that $\gh_H$ converges in probability to $\gamma > 0$, provided the threshold sequence $k = k(n)$ is an intermediate sequence that grows to infinity slower than the sample size $n$. Hill's idea of using a functional of extreme and intermediate order statistics to estimate $\gamma$ has received considerable attention. \citet{csorgo1985kernel} suggest ways to adaptively weight the order statistics, while \citet{dekkers1989moment} modify Hill's estimator so that it is also consistent for a generalization of \eqref{eq:frechet} that includes negative $\gamma$. There have been proposals to eliminate the asymptotic bias of the Hill estimator \citep{beirlant1999tail,feuerverger1999estimating,gomes2000alternatives,peng1998asymptotically}; recent proposals \citep{caeiro2005direct,gomes2007sturdy,gomes2008tail} show how to do so without increasing asymptotic variance.

Nonetheless, tail index estimation remains quite challenging, especially for smaller samples on the order of a few hundred to a thousand points. Of course, many difficulties are inherent to the subject matter: only a small fraction of any sample will be inside the tail of the underlying distribution, and so even large samples may contain very little information relevant to inference about this tail.

Other challenges, however, seem to arise from specifics of popular estimators. All estimators for $\gamma$ require choosing a threshold at which the tail area of the distribution begins. Ideally, specifying a good threshold should be easy, and the estimate $\gh$ should not be sensitive to small changes in the threshold. Unfortunately, most commonly used estimators for $\gamma$ do not reach this ideal. In the case of the Hill estimator---where the parameter $k$ from \eqref{eq:hill} stands in for the threshold---the choice is far from innocuous:

\begin{itemize}
\item Inadequate choice of $k$ can lead to large expected error. Small values of $k$ lead to high variance, while large values of $k$ usually lead to high bias. There is often an intermediate region for $k$ where the estimator has fairly small expected error, but it is not always easy to find this region.
\item The Hill estimator is extremely sensitive to small changes in $k$, even asymptotically: \citet{mason1994weak} show that the Hill estimator process converges in law to a modified Brownian motion. Thus, even within the `good' region with low expected error, a minute change in $k$ can impact the conclusions to be drawn from the model.
\end{itemize}

The problem of choosing the threshold $k$ has been discussed, among others, by \citet{beirlant2002exponential}, \citet{danielsson2001using}, \citet{drees1998selecting}, and \citet{guillou2001diagnostic}. Most existing methods rely on fairly complicated auxiliary models: all but the last of the cited ones require either implicitly or explicitly fitting a difficult-to-fit second-order convergence parameter. As the method due to Guillou and Hall does not require fitting secondary parameters, we use it as our main benchmark in simulation studies. The problem of excessive oscillation of the Hill estimator has been discussed by \citet{resnick1997smoothing}, who recommend smoothing the Hill estimator by integrating it over a moving window. We are not aware of any guidance on how to automatically select $k$ for this smoothed Hill estimator.

In this paper, we study a new estimator for $\gamma$ that arises from a simple subsampling idea. It is well known that sample maxima from a distribution $F$ satisfying \eqref{eq:frechet} have the following property: if $X_1, ..., X_n$ are drawn independently from $F$, then
$$ \limn \PP\left[\frac{\max\{X_1, ..., X_n\}}{\ell(n) \cdot n^\gamma}  \leq x\right] =  G_\gamma(x), $$
where $G_\gamma$ is a limiting cumulative distribution function that only depends on $\gamma$ and $\ell(n)$ is an appropriately chosen slowly varying function. Noting this, we may suspect that when $F$ has positive support,
\begin{equation}
\lim_{s \rightarrow \infty} s \cdot (\EE[\log \max\{X_1, ..., X_s\}] - \EE[\log\max\{X_1, ..., X_{s - 1}\}]) = \gamma.
\label{eq:first}
\end{equation}
In Theorem \ref{theo:rbm}, we show that this relation in fact holds under very mild conditions on $F$ near 0. Our estimator follows directly from this formula. Given a subsample size $1 < s \leq n$, we first estimate the quantities
$$\EE[\log \max\{X_1, ..., X_s\}] \text{ and } \EE[\log \max\{X_1, ..., X_{s - 1}\}]$$
by subsampling our data without replacement, and then use \eqref{eq:first} to obtain an estimate for $\gh$. Since this estimator operates by computing the average log maxima of randomly subsampled blocks, we call it the Random Block Maxima (RBM) estimator. Our idea is related to proposals for tail index estimation that study weighted sums of log-ratios of order statistics by, e.g., \citet{drees1998smooth} and \citet{gardes2008moving}.

The growth rate of subsample functionals for heavy-tailed data has also been studied by \citet{bertail2004subsampling}, \citet{mcelroy2007computer}, and \citet{politis1999subsampling}, who show how to estimate the tail index $\gamma$ using Monte Carlo analysis on subsamples. The point of view taken in their papers is closely related to bagging \citep{breiman1996bagging,buhlmann2002analyzing,buja2006observations}. Here, however, we only use subsampling to motivate our procedure; the final estimator $\gh$ defined in \eqref{eq:gh} can efficiently be computed in closed form.

The RBM estimator can be understood as belonging to two different frameworks of tail index estimation. The block maxima approach, which was often used in the early days of extreme value theory, aims to directly fit the distribution of the maxima of fixed (e.g., yearly) blocks of data.
See \citet{gumbel1958statistics} for a review; \citet{dombry2013maximum} and \citet{ferreira2013block} provide a modern analysis.
In this light, the RBM estimator can be seen as a randomized method of moments estimator in the block maxima framework. Our estimator, however, can also be seen as an outgrowth of the more modern tail estimation paradigm started by the Hill estimator: as we will show, the RBM estimator can be constructed by taking a $U$-statistic over a Hill estimator with two order statistics. In other words, once we start subsampling the data, the block maxima and Hill estimation frameworks merge and lead to the RBM estimator.

Our estimator behaves much like the Hill estimator; however, it addresses threshold selection much more naturally than the latter:
\begin{itemize}
\item The RBM estimator has asymptotically smooth sample paths as a function of its threshold parameter $k$ as defined in \eqref{eq:k}, and, even in modestly sized samples, does not suffer from small-scale instability in $k$.
\item Thanks to its smoothness properties, the RBM estimator admits a simple and intuitive threshold selection rule that does not require fitting a second-order model.
\end{itemize}

\begin{figure}[t]
\centering
\includegraphics[width=0.55\textwidth]{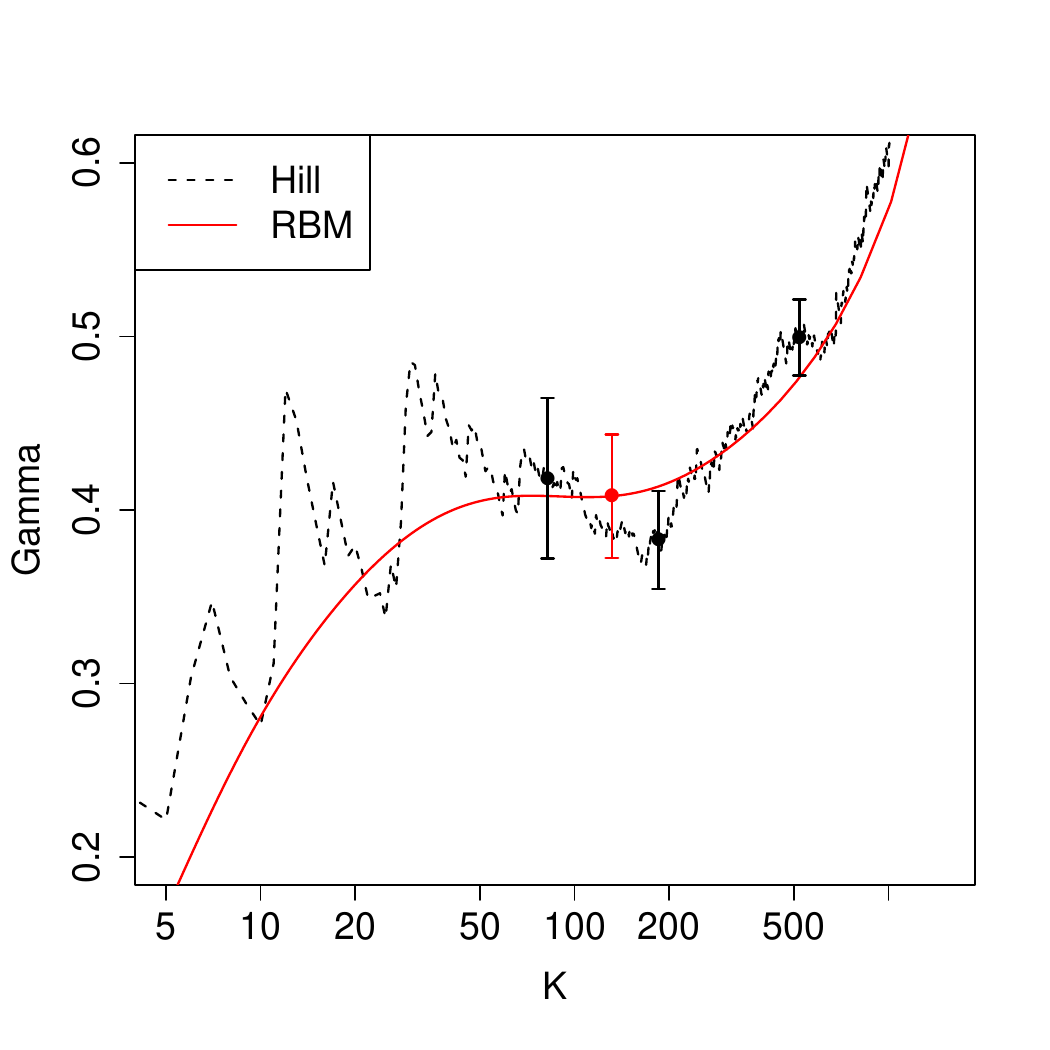}
\caption{Estimates $\gh$ produced by both the Hill and RBM estimators on an IPO dataset ($n = 2037$), as a function of a threshold parameter $k$. We have highlighted some plausible choices for $(\hk, \, \gh)$, along with error bars that are 1 standard error long in each direction (the error bars assume a pre-determined choice of $k$). This dataset is discussed in detail by \citet{petty2012economies}.}
\label{fig:warm_up}
\end{figure}

Figure \ref{fig:warm_up} shows estimates produced by both the Hill and RBM estimators for the tail index $\gamma$ of gross proceeds from venture capital backed IPOs in the United States between 1995 and 2011. Both estimates depend on a threshold parameter $k$. As expected, the RBM sample path is much smoother than the Hill sample path. This makes it easier to select $k$ with the RBM estimator than with the Hill estimator. According to standard guidelines, we should select the tuning parameter $\hk$ for the Hill estimator from an area where the estimator $\gh(k)$ is as stable as possible. But, in Figure \ref{fig:warm_up}, we highlight three different plausible-looking choices for $\hk$ that lead to substantially different estimates $\gh$. We can try to avert this problem by selecting the threshold $\hk$ using a formal rule, but this is not necessarily a fool-proof strategy. It may be possible, for example, for an overzealous practitioner to tune the formal rule to get the plausible-looking answer of his choice. With the RBM estimator, conversely, there is no question about what the right threshold should be, and there is no room for confusion or for second-guessing a choice of $\hk$. In Figure \ref{fig:warm_up}, the threshold $\hk$ for the RBM estimator was selected using our automatic rule, but a human analyst would no doubt have chosen the same threshold.

As our examples and simulation study  should make clear, the main advantage of the RBM estimator is not that it beats the state-of-the-art in tail index estimation by having low mean squared error (MSE). Rather, its strength lies in its stability and ease of use. Practitioners using the RBM estimator can get close to optimal estimates for $\gamma$ by using an estimator $\gh(k)$ that is smooth in the tuning parameter $k$. We have already emphasized that this smoothness facilitates threshold selection, but the advantages do not stop there:
\begin{itemize}
\item The RBM estimator is stable enough in $k$ that we can visually inspect the quality of the extreme value theoretic model and look for abnormal patterns that may indicate a failure of modeling assumptions by simply examining a plot of $\gh$ against $k$. In comparison, the corresponding curve for the Hill estimator is so noisy that it can be difficult to pick out any meaningful patterns with the naked eye.
\item The smooth relationship between $\gh$ and $k$ allows us to use labeled training data to choose $k$ by supervised risk minimization -- e.g., by running RBM on multiple datasets of the same size as our dataset of interest and with known $\gamma$, and then picking $k$ with the lowest prediction error. With the Hill estimator, the noise level is high enough that selection bias can easily overwhelm any true signal; however, with the RBM estimator the number of local minima to choose from is small and so the risk of problems related to selection bias is greatly reduced.
\item With the RBM estimator, a small change in $k$ will usually not produce a large change in $\gh$, and so it is more difficult for a marginally honest experimentalist to tune his choice of $k$ in such a way as to get the value of $\gh$ he wants. Thus, in controversial situations, the RBM estimator may allow for less experimental bias than the Hill estimator.
\end{itemize}
Finally when paired with our threshold selection method, the RBM estimator allows us to get a point estimate for $\gamma$ without having to fit a second-order model and without having to resort to manual threshold selection (for example, \citet{coles2001introduction} recommends manually examining a ``mean residual life plot" to select a threshold when estimating $\gamma$ by maximum likelihood).
In other words, without compromising quality, our RBM estimator is easier to use and gives more stable estimates for $\gamma$ than the Hill estimator, which is one of the most widely used tools for estimating the tail index of a heavy-tailed distribution.

\section{Random Block Maxima}
\label{sec:howto}

As described in \eqref{eq:first}, the RBM estimator for a given subsample size $s \in \{2, \, ..., \, n\}$ is defined by
\begin{equation}
\gh_{RBM}(s) = s \cdot (M(s) - M(s - 1)),
\label{eq:gh}
\end{equation}
where $M(s)$ is the average log maximum over all subsamples of size $s$ drawn without replacement from the full sample of size $n$
\begin{equation}
M(s) = \binom{n}{s}^{-1} \sum_{1 \leq i_1 < ... < i_s \leq n} \max_{\{1 \leq j \leq s\}} \{\log X_{i_j}\},
\label{eq:max}
\end{equation}
and $X_{i_j}$ denotes the $i_j$-th element of $\{X_1, \, ..., \, X_n\}$.
Since we are interested in the behavior of sample maxima, we need to use resampling without replacement instead of with replacement. Otherwise, the presence of duplicate elements in our subsamples would bias our estimates $M(s)$ downwards. In practice we use the formula
\begin{equation}
\label{eq:rbm_compute}
M(s) = \sum_{j = 1}^{n-s+1}  \binom{n - j}{s - 1} \Big/ \binom{n}{s} \, \log X_{n - j + 1, n},
\end{equation}
which allows for efficient computation.

To facilitate comparison between the Hill and RBM estimators, we do not parametrize our estimator directly in terms of the subsample size $s$, but use
\begin{equation}
\label{eq:k}
k = \frac{2n}{s}.
\end{equation}
The motivation for this transformation is that it allows us to match the Hill and RBM estimators by their variance. The Hill and RBM estimators then both have high variance for small $k$ and potentially high bias for large $k$. More precisely, as shown in Theorem \ref{theo:rbm}, the RBM estimator has asymptotic variance
$$ \lim_{n\rightarrow\infty} k(n)\widetilde{\Var}[\gh_{RBM}(k(n))] = \gamma^2, $$
for any intermediate sequence $k(n) = 2n/s(n)$, just like the Hill estimator (the asymptotic variance is the variance of the limiting normal distribution). Asymptotic bias increases with $k$ at a rate that depends on second-order parameters. When there is no risk of confusion, we will sometimes write $\gh_{RBM}(k)$ instead of $\gh_{RBM}(s)$, where $k$ and $s$ are understood to be connected by \eqref{eq:k}.

It is useful to plot $\gh_{RBM}(k)$ against $k$, which gives us an analog of a Hill plot. We have found such plots to be most informative when we plot $k$ on a log scale rather than on a linear scale, as recommended by \citet{drees2000make}. Once we have computed $\gh_{RBM}(k)$ at multiple $k$, the problem becomes to choose which threshold $\hat{k}$ to use for estimating $\gamma$. A good choice of threshold $k$ should aim to simultaneously keep the bias and variance components small.

As we show in section \ref{sec:process}, our estimator $\gh_{RBM}$ converges weakly to a $\cc^{\infty}$ limiting process. In practice, $\gh_{RBM}$ is smooth enough as a function of $k$ that we can reliably estimate its derivative in finite samples. This enables a particularly simple method for selecting a threshold $\hat{k}$ at which to report $\gh$.

We start by computing $\gh_{RBM}$ for subsample sizes $s = n, \, n-1, \, ..., \, 2$. By \eqref{eq:k}, these choices of $s$ correspond to $k$-values $k_1 < k_2 < ... < k_{n-1}$, where $k_i = 2n/s_i$. We then pick $k$ using
\begin{equation}
\hat{k} = \argmin_{s} \left\{\left(\frac{\gh_{RBM}(k_s) - \gh_{RBM}(k_{s-1})}{\log k_s - \log k_{s-1}}\right)^2 + \frac{\gh^2_{RBM}(k_s)}{2 k_s}\right\}.
\label{eq:rbmopt}
\end{equation}
Roughly speaking, this heuristic aims to minimize the square of the derivative
$$ \frac{\partial}{\partial\log k} \ \gh_{RBM}(k) $$
subject to a penalty term that decays as $1/k$. As argued in section \ref{sec:opt}, our choice of $\hat{k}$ aims to minimize possible bias in a heuristic Bayesian sense. We note that this threshold selection procedure is dependent on the smoothness properties of $\gh_{RBM}$. Attempting to use the same method with the Hill estimator $\gh_H$ would not lead to good results, since $\gh_H$ is not asymptotically differentiable as a function of $k$.

\subsection{Examples}
\label{sec:case}

\begin{figure}[t]
\centering
\subfigure[$N = 2000$ points drawn from a Student-$t$ distribution with 4 df. ($\gamma = 0.25$)]{
\includegraphics[width=0.45\textwidth]{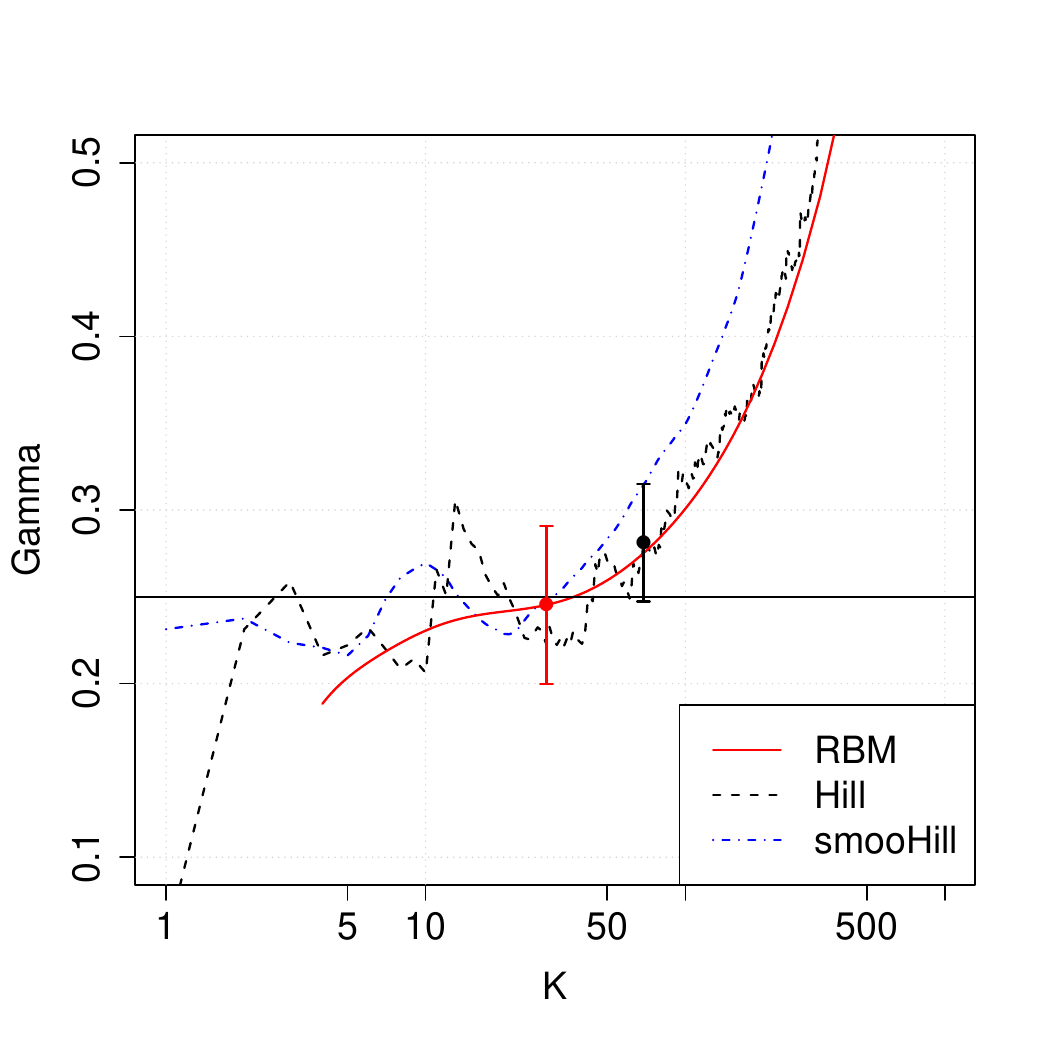}
\label{fig:st4}
}
\subfigure[$N = 500$ points drawn from a Fr\'echet distribution with shape parameter $\chi = 2$. ($\gamma = 0.5$)]{
\includegraphics[width=0.45\textwidth]{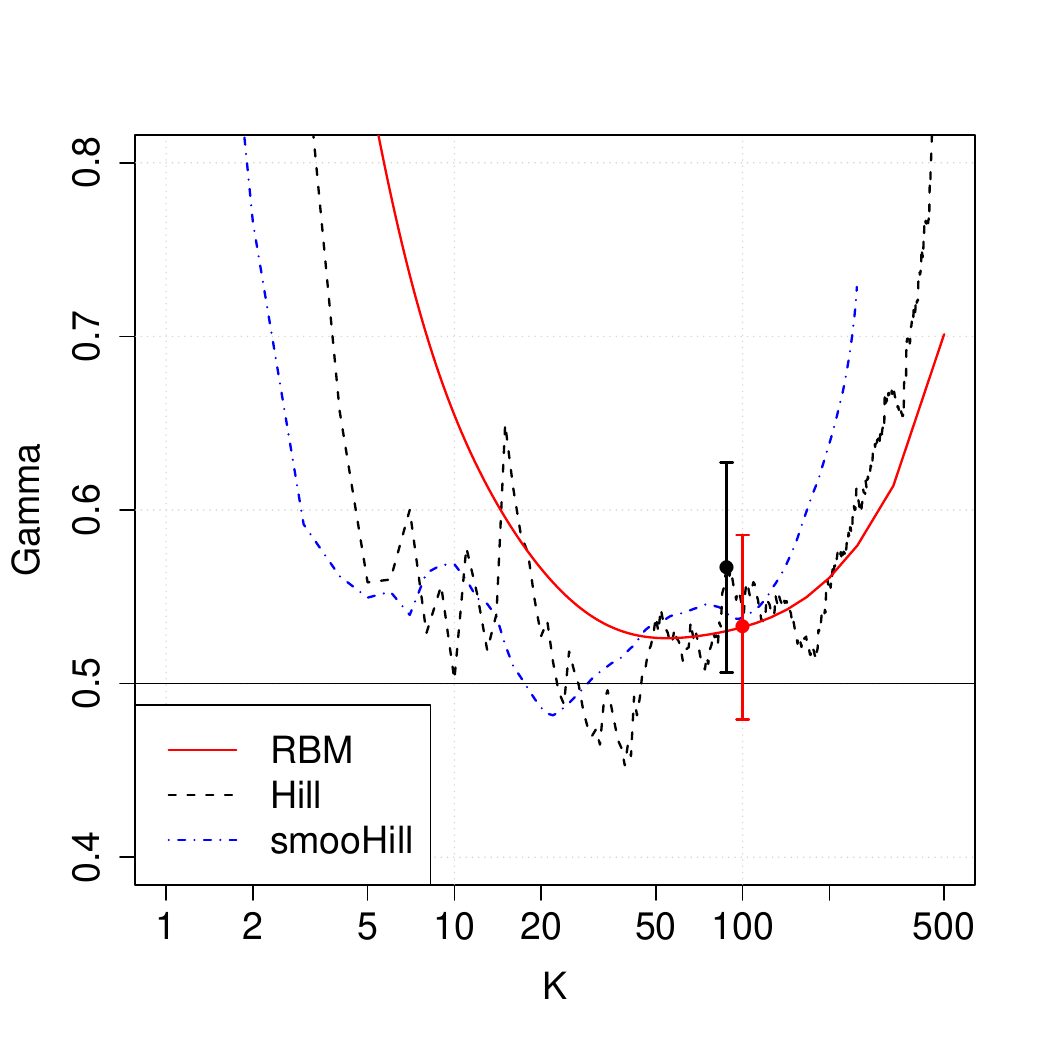}
\label{fig:f2}
}
\caption{Comparison of the RBM, Hill, and smooHill estimators on simulated data. The true value of $\gamma$ is shown by a horizontal line, and the error bars are 1 standard deviation wide. The error bars assume a pre-determined choice of $k$, and do not account for tuning.}
\label{fig:examples}
\end{figure}

We compare the RBM estimator to both the Hill estimator and the smoothed Hill estimator (smooHill) proposed by \citet{resnick1997smoothing}. There exist various heuristics for how wide a smoothing window to use for the smooHill estimator. We follow \citet{resnick2007heavy} and average the Hill estimator on $(k, 2k]$ for each $k$. For the Hill estimator, we use the method from \citet{guillou2001diagnostic} to automatically select $k$, while for RBM we use $\hat{k}$ from \eqref{eq:rbmopt}. The goal of these examples is to show how the RBM estimator can be used in practice; a more rigorous simulation study is given in section \ref{sec:simulation}.

We applied all three estimators first to 2000 datapoints drawn independently from a Student-$t$ distribution with 4 degrees of freedom ($\gamma = 0.25$), and then to 500 datapoints from a Fr\'echet distribution with a shape parameter of 2 ($\gamma = 0.5$). In the case of the Student-$t$ distribution, we discarded all negative datapoints (since all considered estimators involve taking logs of the datapoints), giving us an effective sample size of 992. Our results are given in Figure \ref{fig:examples}.

We observe that the RBM estimator oscillates much less than the Hill estimator or even the smooHill estimator (which has asymptotically $\cc^1$ sample paths whereas the RBM estimator is asymptotically smooth). The instability of the Hill estimator is not benign: around the selected threshold, a small change in $k$ can shift the confidence interval for the estimator by a full standard deviation and potentially change conclusions drawn from the model. Thus, although the estimates given by the RBM estimator at the selected thresholds are not more accurate than those given by either the Hill or the smooHill at the same thresholds, they are much less ambiguous. This should be quite useful in applications, since the less ambiguous the answers given by an estimator are, the easier it is to evaluate convergence, and the less room there is for data dredging or other types of confusion.

\subsection{Avoiding pitfalls}
\label{sec:warnings}

Finally we highlight a few cases where the RBM estimator as described here can fail, and show how to avoid these cases. First, the RBM estimator is somewhat computationally intensive. Our implementation based on \eqref{eq:rbm_compute} can comfortably handle cases where $n$ ranges in the low thousands; however, it becomes painfully slow when $n$ approaches hundreds of thousands.
One way to avoid this problem without losing much information is to throw out all but the largest $M$ datapoints (we usually take $M = 2'000$ or $10'000$). This speeds up the algorithm a lot, and does not cost much in terms of accuracy because most of the information relevant to estimating $\gamma$ is in the largest datapoints anyways.

Second, our threshold selection heuristic may fail if the data given to the RBM estimator is predominantly not from the tail of the distribution; this issue is discussed further in section \ref{sec:opt}. Again, a solution to this problem is to filter our data; in this case, we may want to throw out all the data that does not appear to be in the tail area we are trying to model.

\section{Asymptotics of Random Block Maxima}
\label{sec:rbm}

We now move to theoretical results. The limiting distribution of the RBM estimator can largely be derived from the theory of $U$-statistics. A $U$-statistic is a multi-parameter generalization of a sample mean: given data $X_1, ..., X_n$ and a symmetric $s$-parameter function $f$, the $U$-statistic over $f$ is defined as
\begin{equation}
\label{eq:u}
U_n\left(X_1, ..., X_n\right) := \binom{n}{s}^{-1} \sum_{\{I \subseteq \{1, ..., n\}:|I| = s\}} f\left(X_I\right),
\end{equation}
where $|\cdot|$ denotes cardinality.
Such statistics have many desirable regularity properties. In particular, \citet{hoeffding1948class} showed that when the underlying function $f$ is held fixed, $U$-statistics are asymptotically normal with variance decaying as $1/n$.

As we have already stated earlier, our estimator $RBM_{k, n} := \gh_{RBM}(k)$ given in \eqref{eq:gh} can be described as a $U$-statistic over the Hill estimator. More precisely, for positive random variables $X_1, ..., X_s$ with $s^{th}$ order statistics $X_{1, s} \leq ... \leq X_{s, s}$, let $H_1^{(s)}$ be the $k = 1$ Hill estimator on $s$ datapoints
\begin{equation}
\label{eq:h1}
H_1^{(s)}(X_1, ..., X_s) := \log X_{s, s} - \log X_{s-1, s}.
\end{equation}
We can then write $RBM_{k, n}$ as a $U$-statistic over $H_1^{(s)}$. All proofs are given in the Appendix.

\begin{lemm}
\label{lemm:notation}
Let $X_1, ..., X_n$ be positive random variables with $n^{th}$ order statistics $X_{1, n} \leq ... \leq X_{n, n}$. Then, the RBM estimator given in \eqref{eq:gh} is equal to
$$ RBM_{k, n} = \binom{n}{s}^{-1} \sum_{1 \leq i_1 < ... < i_s \leq n} H_1^{(s)}(X_{i_1}, ..., X_{i_s}), $$
where $k$ satisfies the relation $s =  2n/k $.
\end{lemm}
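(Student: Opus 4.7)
The plan is a direct combinatorial identification: I will show that the $U$-statistic over $H_1^{(s)}$ equals $s(M(s)-M(s-1))$, which matches $\gh_{RBM}(s) = RBM_{k,n}$ by \eqref{eq:gh} under the parametrization $s = 2n/k$. For a size-$s$ subset $I \subseteq \{1,\ldots,n\}$, abbreviate the largest and second-largest values of $\{X_i : i \in I\}$ by $M_I^{(1)}$ and $M_I^{(2)}$, so that $H_1^{(s)}(X_I) = \log M_I^{(1)} - \log M_I^{(2)}$. The heart of the argument is rewriting $\log M_I^{(2)}$ in terms of maxima of sub-subsets of $I$.

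The key combinatorial observation is that, among the $s$ size-$(s-1)$ sub-subsets of $I$, exactly one---namely $I$ with its largest element removed---has maximum equal to $M_I^{(2)}$, while each of the remaining $s-1$ sub-subsets still contains the overall maximum and therefore has maximum $M_I^{(1)}$. This yields the identity
\begin{equation*}
\sum_{J \subset I,\, |J|=s-1} \log M_J^{(1)} \;=\; (s-1)\log M_I^{(1)} + \log M_I^{(2)},
\end{equation*}
which rearranges to $H_1^{(s)}(X_I) = s \log M_I^{(1)} - \sum_{J \subset I,\, |J|=s-1} \log M_J^{(1)}$.

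I would then sum this identity over all size-$s$ subsets $I$. The first term contributes $s \binom{n}{s} M(s)$ directly, since $M(s) = \binom{n}{s}^{-1} \sum_I \log M_I^{(1)}$ by \eqref{eq:max}. For the remaining double sum, swap the order of summation: each size-$(s-1)$ subset $J$ is contained in exactly $n-s+1$ size-$s$ supersets, so this piece equals $(n-s+1)\binom{n}{s-1} M(s-1)$. The elementary identity $(n-s+1)\binom{n}{s-1} = s\binom{n}{s}$ then lets me factor out $s\binom{n}{s}$, and dividing through by $\binom{n}{s}$ recovers $s(M(s)-M(s-1))$, which is $\gh_{RBM}(s) = RBM_{k,n}$.

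The only step requiring real thought is spotting the sub-subset identity for $\log M_I^{(2)}$; after that, the proof reduces to routine double-counting with binomial coefficients. There are no probabilistic or asymptotic subtleties, because the lemma is a purely algebraic statement that holds pointwise for any collection of positive numbers $X_1, \ldots, X_n$.
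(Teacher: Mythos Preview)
Your proof is correct and follows essentially the same approach as the paper's: both hinge on the identity $H_1^{(s)}(X_I) = s\log M_I^{(1)} - \sum_{J\subset I,\,|J|=s-1}\log M_J^{(1)}$ together with the double-counting observation that each $(s-1)$-subset lies in exactly $n-s+1$ supersets of size $s$. The only cosmetic difference is direction---the paper starts from $s(M(s)-M(s-1))$ and arrives at the $U$-statistic, whereas you start from the $U$-statistic and arrive at $s(M(s)-M(s-1))$.
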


Expressing $RBM_{k, n}$ as a $U$-statistic enables us to leverage the extensive literature on the topic. Our problem, however, does not quite fall into the classical scope of $U$-statistics. Most of the literature assumes that the function $f$ in \eqref{eq:u} is fixed as $n$ grows. But, in our case, the functions $H_1^{(s)}$ take a number of parameters that increases with $n$. Such a $U$-statistic is called an infinite order $U$-statistic. Although (as shown below) the classical asymptotic distributional results for $U$-statistics still hold in our case, the infinite order nature of the problem requires some additional work.

A common strategy for showing the asymptotic normality of a sequence of statistics $\{U_n\}$ is by approximating the $U_n$ by their H{\'a}jek projections $\widehat{U}_n$. Suppose $X_1, ..., X_n$ are drawn from some known distribution, and let $U_n$ be an $n$-parameter function. We then define its H{\'a}jek projection $\widehat{U}_n$ as
\begin{equation}
\label{eq:hajek}
\widehat{U}_n := \EE\left[U_n\right] + \sum_{i = 1}^n \EE\left[U_n - \EE\left[U_n\right]|X_i\right].
\end{equation}
The advantage of studying such projections is that, when the $X_i$ are independent and identically distributed (\emph{iid}), $\widehat{U}_n$ is a sum of independent random variables to which we can apply the central limit theorem.

When $U_n$ is a $U$-statistic, the difference $U_n - \widehat{U}_n$ converges to zero in mean square under fairly general conditions. The following lemma is a consequence of the Efron-Stein ANOVA decomposition.

\begin{lemm}
\label{lemm:anova}
Let $X_1, X_2, ...$ be \emph{iid} random variables, and let $s(n)$ be a sequence such that $s(n) \leq n$ for all $n$. Moreover, let $g^{(n)}$ be a sequence of real-valued $s(n)$-parameter functions that are symmetric in their arguments, and let there be a constant $C$ such that
$$\Var\left[g^{(n)}(X_1, ..., X_{s(n)})\right] \leq C$$
for all $n$. Then, taking $U_n$ as a $U$-statistic over $g^{(n)}$
$$ U_n = \binom{n}{s(n)}^{-1} \sum_{\{I \subseteq \{1, ..., n\}: |I| = s(n)\}} g^{(n)}(X_I), $$
we find that
$$ \EE\left[\left(U_n - \widehat{U}_n\right)^2\right] = O\left[\left(\frac{s(n)}{n}\right)^2\right], $$
where $\widehat{U}_n$ is defined as in \eqref{eq:hajek}.
\end{lemm}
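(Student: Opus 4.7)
The plan is to apply the Hoeffding (Efron--Stein) ANOVA decomposition to the kernel $g^{(n)}$, treating the order $s = s(n)$ as varying with $n$. By symmetry of $g^{(n)}$ and the iid assumption, there is an orthogonal decomposition
$$g^{(n)}(X_1, \ldots, X_s) = \sum_{A \subseteq \{1, \ldots, s\}} h^{(n)}_{|A|}(X_A),$$
in which $h^{(n)}_c$ is a fixed symmetric $c$-variate function (mean zero for $c \geq 1$) satisfying the degeneracy condition $\EE[h^{(n)}_c(X_B) \mid X_D] = 0$ whenever $D \subsetneq B$. Orthogonality gives $\Var[g^{(n)}] = \sum_{c=1}^{s} \binom{s}{c} \sigma_c^2$ with $\sigma_c^2 := \Var[h^{(n)}_c(X_1, \ldots, X_c)]$, so the hypothesis $\Var[g^{(n)}] \leq C$ delivers the uniform envelope $\binom{s}{c} \sigma_c^2 \leq C$ for every $c$.

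Substituting this decomposition into the $U$-statistic and swapping the order of summation, I would write
$$U_n = \sum_{c=0}^{s} \binom{s}{c} V_c, \qquad V_c := \binom{n}{c}^{-1} \sum_{A \subseteq \{1,\ldots,n\},\; |A| = c} h^{(n)}_c(X_A),$$
using the identity $\binom{n-c}{s-c}/\binom{n}{s} = \binom{s}{c}/\binom{n}{c}$. Each $V_c$ is a degenerate $U$-statistic of order $c$; distinct $V_c$ are uncorrelated by degeneracy, and a short within-$c$ computation gives $\Var[V_c] = \binom{n}{c}^{-1} \sigma_c^2$. The H\'ajek projection picks out exactly the $c \leq 1$ components: a direct computation shows $\EE[V_c \mid X_i] = 0$ for every $c \geq 2$ (either $i \notin A$, in which case the mean-zero property does it, or $i \in A$, in which case degeneracy with $D = \{i\} \subsetneq A$ does it), while $\sum_i \EE[V_1 \mid X_i] = V_1$. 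Consequently $\widehat{U}_n = \EE[U_n] + s\, V_1$ and
$$U_n - \widehat{U}_n = \sum_{c = 2}^{s} \binom{s}{c} V_c, \qquad \EE\bigl[(U_n - \widehat{U}_n)^2\bigr] = \sum_{c=2}^{s} \frac{\binom{s}{c}}{\binom{n}{c}} \cdot \binom{s}{c} \sigma_c^2.$$

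The remainder is an elementary estimate. Since $\binom{s}{c}/\binom{n}{c} = \prod_{i=0}^{c-1}(s-i)/(n-i) \leq (s/n)^c$ for $s \leq n$, and since $\binom{s}{c}\sigma_c^2 \leq C$, the last display is bounded by $C \sum_{c=2}^{s}(s/n)^c$. For $s(n)/n \leq 1/2$ this geometric tail is at most $2(s/n)^2$, which gives the claim. For $s(n)/n > 1/2$ the geometric bound loses its edge, and I would fall back on the crude estimate $\EE[(U_n - \widehat{U}_n)^2] \leq \Var[U_n] \leq \Var[g^{(n)}] \leq C$ (the first inequality by the projection property of $\widehat{U}_n$, the second because $U_n$ is the mean of exchangeable copies of $g^{(n)}$), and then use $(s/n)^2 \geq 1/4$ to absorb $C \leq 4C(s/n)^2$.

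The only real obstacle is book-keeping the ANOVA decomposition uniformly in $n$ while $s(n)$ varies. No new ideas are needed beyond the classical Hoeffding argument, but one must remember to (i) invoke $\Var[g^{(n)}] \leq C$ to control every $\binom{s}{c}\sigma_c^2$ at once, rather than controlling a single $\sigma_c^2$ as in the fixed-order setting, and (ii) handle the regime $s(n)/n$ close to $1$ separately from the geometric-series regime so that the $O((s(n)/n)^2)$ bound is genuinely uniform.
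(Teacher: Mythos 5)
Your proof is correct and follows essentially the same route as the paper's: both apply the Hoeffding/Efron--Stein ANOVA decomposition to the variable-order kernel and reduce the claim to bounding $\sum_{c\ge 2}\binom{s}{c}^2\binom{n}{c}^{-1}\sigma_c^2$. The paper sidesteps your geometric-series estimate and the ensuing case split on whether $s/n\le 1/2$ by noting that $\binom{s}{c}/\binom{n}{c}\le s(s-1)/\bigl(n(n-1)\bigr)$ uniformly over $c\ge 2$, after which a single appeal to $\sum_{c}\binom{s}{c}\sigma_c^2=\Var\bigl[g^{(n)}\bigr]\le C$ finishes the bound.
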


We are now ready to prove our main result. As is common in extreme value theory, the result relies on a second-order convergence criterion. The full statement of Theorem \ref{theo:rbm} requires a non-degenerate second order condition ($\rho < 0$ as defined in \eqref{eq:second}), and does not necessarily hold in the limiting case with logarithmic second-order convergence ($\rho = 0$). As noted at the end of this section, however, we do not need this second-order condition to show that $\gh_{RBM}$ is consistent. For an overview of the second-order condition in extreme value theory, see e.g., \citet{dehaan2006extreme}.

\begin{theo}
\label{theo:rbm}
Let $X_1, ..., X_n$ be drawn \emph{iid} from a distribution $F$ satisfying the second-order condition
\begin{equation}
\label{eq:second}
\limt \frac{\frac{U(tx)}{U(t)} - x^\gamma}{A(t)} = x^\gamma \frac{x^\rho - 1}{\rho}, \ \forall x > 0
\end{equation}
for some $\gamma > 0$, $\rho < 0$, and a function $A(t) \rightarrow 0$ with constant sign. Here, $U(t)$ is the inverse quantile function
\begin{equation}
\label{eq:ut}
U(t) = \inf\left\{x : \frac{1}{1 - F(x)} \geq t\right\}.
\end{equation}
Moreover, suppose that $F$ satisfies the technical condition
\begin{equation}
\label{eq:technical}
\lim_{x \rightarrow 0} F(x) \cdot x^{-\frac{1}{\beta}} = 0 \ \text{for some} \ \beta > 0,
\end{equation}
and let $RBM_{k, n}$ be the RBM estimator as described in Lemma \ref{lemm:notation}.

If $k(n)$ is an intermediate sequence with $k(n) \rightarrow \infty$ and $k(n)/n \rightarrow 0$ such that
\begin{equation}
\label{eq:lambda}
\limn \sqrt{k(n)} A \left ( \frac{n}{k(n)} \right ) = \lambda \ \text{for some} \ \lambda \in \RR,
\end{equation}
then, for any $a > 0$, $RBM_{ak(n), n}$ is asymptotically normal with
\begin{equation}
\label{eq:RBM_k}
\sqrt{k(n)} (RBM_{ak(n), n} - \gamma) \Rightarrow \nn \left ( \lambda \Gamma(1 - \rho) \left(\frac{a}{2}\right)^{-\rho}, \frac{\gamma^2}{a} \right),
\end{equation}
where $\Gamma$ is the gamma function. Moreover, for any $a_1, ..., a_m > 0$, the estimators $RBM_{a_i k(n), n}$ are aymptotically jointly normal with covariance
$$ \limn k(n) \Cov[RBM_{a_i k(n), n}, RBM_{a_j k(n), n}] = \frac{2\gamma^2}{a_i + a_j}. $$
\end{theo}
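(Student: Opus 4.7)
The plan is to combine the $U$-statistic representation of $RBM_{k, n}$ from Lemma \ref{lemm:notation} with the H\'ajek-projection bound of Lemma \ref{lemm:anova}, and to analyze the kernel $H_1^{(s)}$ via a Renyi-type representation in the exponential/quantile scale. I would write $X_i \eqd U(e^{Y_i})$ with $Y_i$ iid $\mathrm{Exp}(1)$; Renyi's representation of exponential order statistics then yields
$$H_1^{(s)} \eqd \log\!\bigl(U(T e^{Z})/U(T)\bigr), \qquad T := e^{Y_{s-1,s}},\ Z := Y_{s,s} - Y_{s-1,s} \sim \mathrm{Exp}(1),$$
with $T$ and $Z$ independent. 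Applying the second-order expansion \eqref{eq:second} to the ratio $U(Te^Z)/U(T)$ gives
$$H_1^{(s)} = \gamma Z + A(T)\cdot \tfrac{e^{\rho Z}-1}{\rho} + \text{lower order}.$$
Taking expectations and using that $T/s$ has an explicit limiting density (derivable from the joint law of the top two exponential order statistics) together with regular variation of $A$ with index $\rho$, the bias term reduces to $\Gamma(1-\rho)\,A(s)(1+o(1))$. Rescaling via $s = 2n/(ak(n))$ and invoking \eqref{eq:lambda} produces the asymptotic bias $\lambda\,\Gamma(1-\rho)(a/2)^{-\rho}$ claimed in \eqref{eq:RBM_k}.

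Second, I would promote this marginal analysis to the full $U$-statistic. Lemma \ref{lemm:anova} gives $\EE[(RBM_{ak, n} - \widehat{RBM}_{ak, n})^2] = O((s/n)^2) = O(1/k^2)$, which is $o(1/k)$ and hence negligible at the CLT scale $\sqrt{k(n)}$. The H\'ajek projection is the iid sum
$$\widehat{RBM}_{ak, n} = \mu_s + \frac{s}{n}\sum_{i=1}^n \bigl(g_1^{(s)}(X_i) - \mu_s\bigr),\qquad g_1^{(s)}(x) := \EE[H_1^{(s)} \mid X_1 = x],$$
to which Lindeberg's CLT applies once a Lyapunov moment is verified. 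Hypothesis \eqref{eq:technical} enters here: it guarantees enough moments of $\log X_1$ near $0$ to control $g_1^{(s)}$ uniformly in $s$. To pin down the asymptotic variance $\gamma^2/a$, I would split $g_1^{(s)}(X_1) - \mu_s$ according to whether $X_1$ lies above the maximum, between the top two, or below the second-largest of an independent size-$(s-1)$ sample; the last case contributes nothing, the first two are active only on an event of probability $O(1/s)$, and the conditional second moment on that event gives $\Var[g_1^{(s)}(X_1)] \sim \gamma^2/(2s)$, so that $s^2 \Var[g_1^{(s)}]/n \to \gamma^2/a$.

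For the joint normality statement, the H\'ajek projections at the different thresholds $a_i k(n)$ are linear functionals of the same vector $(X_1,\dots,X_n)$, so joint convergence reduces to a multivariate Lindeberg CLT for the stacked influence functions $\bigl(g_1^{(s_i)}(X_j)\bigr)_i$. A parallel case analysis gives $\Cov[g_1^{(s_i)}(X_1), g_1^{(s_j)}(X_1)] \sim \gamma^2/(s_i + s_j)$, which after multiplying by the prefactor $k\,s_i s_j/n$ collapses to the advertised covariance $2\gamma^2/(a_i + a_j)$.

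The main obstacle I anticipate is the influence-function calculation. Because $H_1^{(s)}$ depends only on the top two of $s$ order statistics, $g_1^{(s)}(X_1)$ is essentially constant off an event of probability $O(1/s)$, and extracting sharp constants for both that event's probability and the conditional second moments --- uniformly in $s$, and jointly in the pairs $(s_i, s_j)$ for the covariance claim --- requires care, especially in ruling out contributions from the transition region where $X_1$ is near the $(s{-}1)$-st order statistic. The other pieces (the bias under the second-order condition, the negligibility of $RBM_{k,n} - \widehat{RBM}_{k,n}$, and the CLT itself) are routine once the influence function is fully understood.
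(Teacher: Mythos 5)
Your proposal is correct and follows essentially the same route as the paper: Hájek-projection reduction via Lemma \ref{lemm:anova}, a Rényi-representation analysis of the kernel $H_1^{(s)}$ under the second-order condition for the bias (this is Lemma \ref{lemm:bias}), a three-case analysis of the influence function $\Psi_s(x) = \EE[H_1^{(s)}\mid X_1 = x]$ for the variance and covariance (this is exactly the $\delta_s$ decomposition in Lemma \ref{lemm:variance}), and a triangular-array CLT plus Slutsky to finish. You correctly identify the uniform-in-$s$ control of the influence-function second moments as the technical crux — the paper handles it via the limiting $\Gamma(0,\cdot)$ integrand, weak convergence of the rescaled $z_s$-distribution to Lebesgue measure, and a uniform-integrability bound from Hoeffding's inequality — and apart from a small normalization slip (it should be $k\,s^2\Var[g_1^{(s)}]/n \to \gamma^2/a$, not $s^2\Var[g_1^{(s)}]/n$), your outline matches the paper's proof.
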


The technical condition \eqref{eq:technical} is very weak, and can in practice be ignored. In an extreme value theoretic setup we usually care about very large values, whereas this condition only specifies the behavior of very small values. This condition trivially holds if $F$ is supported on $[\varepsilon, \infty)$ for some $\varepsilon > 0$.

We end this section by noting that, by Lemma \ref{lemm:hill}, even when $F$ does not satisfy the second-order condition for some $\rho < 0$, or when the sequence $k(n)$ does not satisfy \eqref{eq:lambda}, $H_1^{(2n/k(n))}$ still converges to $\gamma$ in expectation. Thus, by a slight modification of the proof of Theorem \ref{theo:rbm}, we find that, given any distribution $F$ with tail index $\gamma > 0$, $RBM_{k(n), n}$ is consistent for $\gamma$ along any intermediate sequence $k(n)$ provided $F$ satisfies the technical condition \eqref{eq:technical}.

\subsection{A Comparison with the Hill estimator}

\begin{figure}[t]
\centering
\includegraphics[width = 0.45\textwidth]{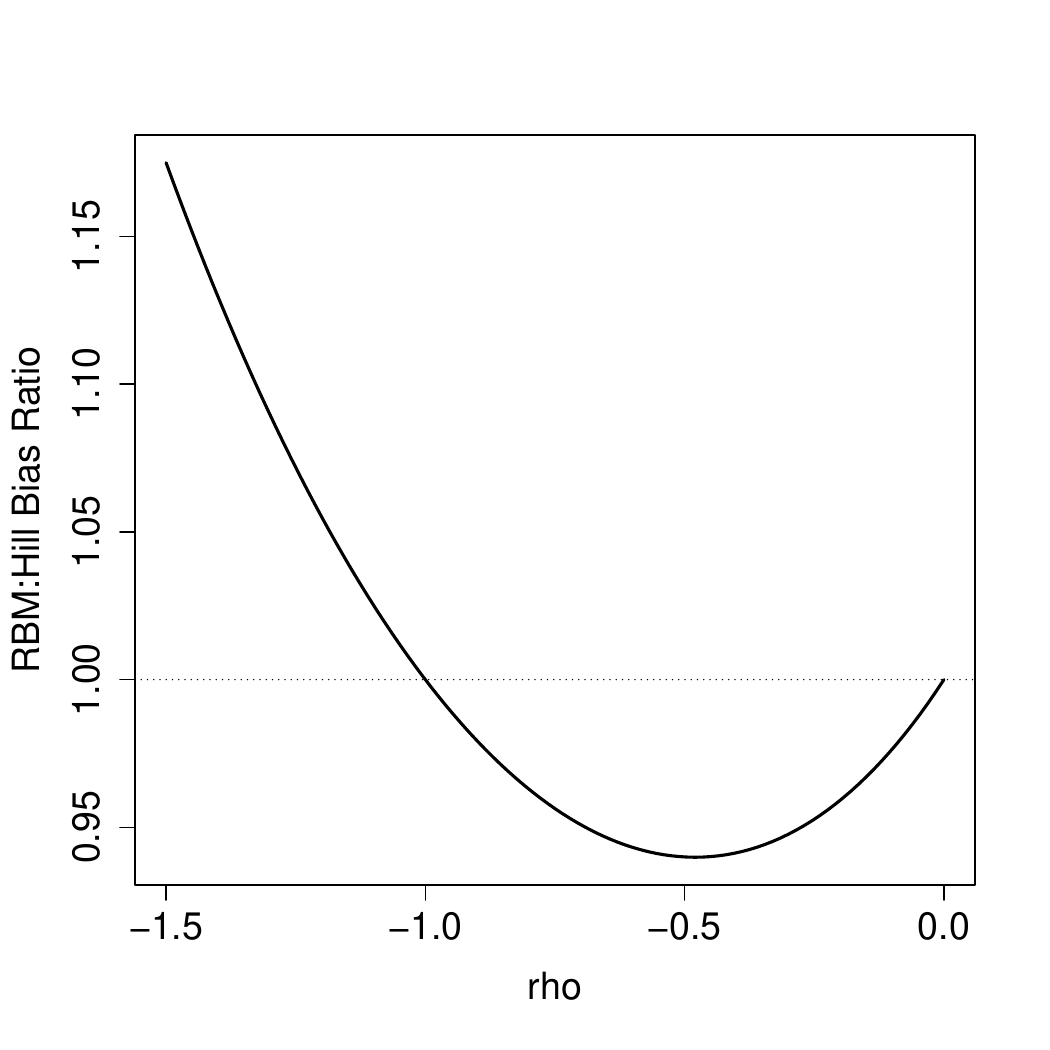}
\caption{Relative bias of the Hill and RBM estimators at equal variance, as a function of the second-order parameter $\rho$.}
\label{fig:bias}
\end{figure}

It is well known \citep[e.g.,][Theorem 3.2.5]{dehaan2006extreme} that under the conditions of Theorem \ref{theo:rbm}, the Hill estimator has an asymptotic distribution
\begin{equation}
\label{eq:H_k}
\sqrt{k(n)} (H_{ak(n), n} - \gamma) \Rightarrow \nn \left ( \frac{\lambda}{1 - \rho} a^{-\rho}, \frac{\gamma^2}{a} \right).
\end{equation}
At equal variance, the relative bias of the Hill and RBM estimators only depends on the second order parameter $\rho$:
$$ \limn \frac{\EE\left[RBM_{ak(n), n} - \gamma\right]}{\EE\left[H_{ak(n), n} - \gamma\right]} = 2^\rho \, \Gamma(1 - \rho) \, (1 - \rho). $$
We plot this function in Figure \ref{fig:bias}. For distributions with slow second-order convergence ($-1 < \rho < 0$) the RBM estimator is somewhat more efficient than the Hill estimator, but as we approach an idealized setting with very small $\rho$, the Hill estimator is less biased. In the common case of $\rho = -1$ (satisfied, e.g., by the Fr\'echet distribution), the Hill and RBM estimators have the same asymptotic bias.

Comparing \eqref{eq:RBM_k} with \eqref{eq:H_k}, we see that the RBM and Hill estimators have very similar behaviors when we only look at one value of $k$ at a time. The crucial difference between the two methods is that the RBM process $\gh(k)$ is a smooth function of $k$, while the Hill process is not.

\section{The RBM Process}
\label{sec:process}

Our result from the previous section leads naturally to the definition of an RBM process. Under the conditions of Theorem \ref{theo:rbm} with some $\gamma > 0$ and $\rho < 0$, let $k(n)$ be an intermediate sequence such that, for some finite $\lambda$,
$$ \limn \sqrt{k(n)}A\left(\frac{n}{k(n)}\right) = \lambda. $$
Then, writing
\begin{equation}
\label{eq:err}
X_n(t) = \sqrt{k(n)}\left(RBM_{tk(n), n} - \gamma\right),
\end{equation}
our result in Theorem \ref{theo:rbm} implies that, for all $t_1, ..., t_m > 0$, the $X_n(t_i)$ are asymptotically jointly normal with
\begin{align}
\label{eq:process}
&\limn \EE\left[X_n(t_i)\right] = \lambda \Gamma(1 - \rho) \left(\frac{t_i}{2}\right)^{-\rho} \\ \notag
&\limn \Cov\left[X_n(t_i), X_n(t_j)\right] = \frac{2\gamma^2}{t_i + t_j}.
\end{align}
These mean and covariance equations can be used to define a Gaussian process, which we call the RBM process.

\begin{defi}
\label{defi:process}
Given values $\gamma > 0$ and $\rho < 0$, let $X(t)$ be the Gaussian process on $(0, \, \infty)$ satisfying the mean and covariance relations \eqref{eq:process}. The RBM process $R(\tau)$ is then defined by $R(\tau) = X\left(e^\tau\right)$ for $\tau \in \RR$. 
\end{defi}

We define the RBM process on a log scale since this allows us to write down its properties more cleanly. This should not be too surprising, since the estimator as written in \eqref{eq:gh} is essentially a derivative $\frac{\partial M(s)}{\partial \log s}$ with $s$ on a log scale. In a similar vein, \citet{drees2000make} show that the Hill process is most naturally plotted with $k$ on a log scale. The following lemma shows that the $X_n(t)$ do in fact converge in law to the process $X(t)$.

\begin{lemm}
\label{lemm:convergence}

Let $X_n(t)$ be defined as in \eqref{eq:err} under the conditions of Theorem \ref{theo:rbm}, and let $X(t)$ be the auxiliary process from Definition \ref{defi:process} with the appropriate $\gamma > 0$ and $\rho < 0$. Then, the $X_n(t)$ converge weakly to $X(t)$ on compact intervals of $(0, \, \infty)$ under the Skorokhod topology on the space of cadlag functions $\mathcal{D}$.

\end{lemm}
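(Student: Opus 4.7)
The plan is to follow the standard two-step route to weak convergence in the Skorokhod space $\mathcal{D}([a, b])$ on a compact interval $[a, b] \subset (0, \infty)$: establish (i) convergence of finite-dimensional distributions, and then (ii) tightness of the family $\{X_n\}$.

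Step (i) is essentially immediate from Theorem \ref{theo:rbm}. Applied with $a_i = t_i \in [a, b]$ for any finite collection $t_1, \ldots, t_m$, the theorem already gives joint asymptotic normality of $(X_n(t_1), \ldots, X_n(t_m))$ with limiting mean $\lambda \, \Gamma(1 - \rho) \, (t_i / 2)^{-\rho}$ and covariance $2 \gamma^2 / (t_i + t_j)$, which matches exactly the mean and covariance structure defining the process $X(t)$ in Definition \ref{defi:process}.

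For step (ii) I would exploit the $U$-statistic representation of Lemma \ref{lemm:notation}. Write $X_n(t) = \widehat{X}_n(t) + R_n(t)$, where $\widehat{X}_n$ is the recentered and rescaled H\'ajek projection of $X_n$, defined via \eqref{eq:hajek}, and $R_n$ is the residual. Lemma \ref{lemm:anova} applied with kernel $H_1^{(s)}$ and $s = \lfloor 2n/(tk(n)) \rfloor$ (using that $\Var[H_1^{(s)}]$ is bounded uniformly in $n$, as will already be needed in the proof of Theorem \ref{theo:rbm}) yields the pointwise bound $\EE[R_n(t)^2] = O(1/k(n))$. The projection $\widehat{X}_n(t)$ is a sum of $n$ independent contributions whose increments I expect to obey a moment bound of the form $\EE[(\widehat{X}_n(t) - \widehat{X}_n(s))^2] = O((t - s)^2)$ on $[a, b]$, matching the variance $\gamma^2 (t - s)^2 / [t s (t + s)]$ of the limit increments $X(t) - X(s)$. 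This alone gives tightness of $\{\widehat{X}_n\}$ by a standard second-moment criterion (e.g., Theorem 13.5 of Billingsley for c\`adl\`ag processes).

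The main obstacle will be upgrading the pointwise estimate $\EE[R_n(t)^2] = O(1/k(n))$ to the uniform control $\sup_{t \in [a, b]} |R_n(t)| \to 0$ in probability needed to combine with tightness of $\widehat{X}_n$ via Slutsky. The process $X_n$ can jump at each integer value of $s = \lfloor 2n/(tk(n)) \rfloor$, producing on the order of $n/k(n)$ jumps on $[a, b]$, so a bare union bound combined with Lemma \ref{lemm:anova} is far too crude. My strategy is to invoke Hoeffding's canonical ANOVA decomposition of $U_n - \widehat{U}_n$ to extract higher-moment estimates on the residual (the $L^2$ bound from Lemma \ref{lemm:anova} already corresponds to the second-order projection, and the higher-order projections decay faster in $s/n$), and then to control $\sup_t |R_n(t)|$ by a chaining argument along a dyadic subdivision of $[a, b]$. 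Once uniform negligibility of $R_n$ and tightness of $\widehat{X}_n$ are both in hand, together with step (i) they yield the claimed weak convergence.
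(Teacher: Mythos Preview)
Your step (i) matches the paper exactly. Where you diverge is in the route to tightness: you split $X_n = \widehat{X}_n + R_n$, prove tightness for $\widehat{X}_n$, and then try to force $\sup_{t\in[a,b]}|R_n(t)|\to 0$ in probability via higher-moment ANOVA bounds and chaining. The paper never attempts uniform control of $R_n$; it works directly with the increment moment criterion (Billingsley 1968, Theorem~15.6), which only asks that for each $\varepsilon>0$ there exist $N_\varepsilon$ with
\[
\EE\bigl[(X_n(t_2)-X_n(t_1))^2\bigr]\le C\varepsilon^2
\quad\text{for all }|t_1-t_2|<\varepsilon,\ n\ge N_\varepsilon.
\]
One then decomposes $X_n(t_2)-X_n(t_1)$ as $(X_n(t_2)-\widehat X_n(t_2))+(\widehat X_n(t_2)-\widehat X_n(t_1))+(\widehat X_n(t_1)-X_n(t_1))$. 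The outer two residual terms are each $O(1/k(n))$ \emph{uniformly on $[a,b]$} by Lemma~\ref{lemm:anova}, so for any fixed $\varepsilon$ one simply takes $N_\varepsilon$ large enough that $1/k(N_\varepsilon)<\varepsilon^2$. The middle term genuinely satisfies a $(t_2-t_1)^2$ bound via Lemma~\ref{lemm:variance} (variance) and Lemma~\ref{lemm:bias} plus Potter bounds (mean). The point is that the criterion lets $N_\varepsilon$ depend on $\varepsilon$, so the H\'ajek residual only needs to vanish pointwise, not satisfy an increment estimate.

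Your chaining route is not wrong in principle, but the hurdle you flag---upgrading pointwise $L^2$ control of $R_n$ to uniform control over $O(n/k(n))$ discontinuity points---is real work, and the higher-moment bounds you would need on degenerate $U$-statistic components with growing kernel order are not immediate. The paper's approach makes this entire issue disappear by folding the residual into the increment bound rather than treating it as a separate process.
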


Our RBM process is analogous to the Hill process as discussed in \citet{resnick1997smoothing}. These two processes, however, behave very differently. While the Hill process is equivalent to a modified Wiener process and so has continuous but non-differentiable sample paths, the RBM process has smooth sample paths.

\begin{theo}
\label{theo:process}
There exists a modification of the RBM process defined in \ref{defi:process} that has $\cc^\infty$ sample paths on $\RR$. Moreover, for any $\tau \in \RR$, $R(\tau)$ and its derivative $R'(\tau)$ have joint distribution
$$ \begin{pmatrix} R(\tau) \\ R'(\tau) \end{pmatrix}
\eqd \nn\left(\frac{\lambda \Gamma(1 - \rho) e^{-\rho \tau}}{2^{-\rho}} \begin{pmatrix} 1 \\ -\rho \end{pmatrix}, 
\frac{\gamma^2}{e^\tau} \begin{pmatrix} 1 & -1/2 \\ -1/2 & 1/2 \end{pmatrix}\right). $$
\end{theo}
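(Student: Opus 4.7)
The plan is to exhibit an explicit stochastic integral representation of the centered RBM process in which the integrand is $\cc^\infty$ in $\tau$, and then pass derivatives under the integral sign to construct smooth sample paths. Once smoothness is in hand, the joint distribution of $R(\tau)$ and $R'(\tau)$ reduces to Gaussianity plus a short covariance calculation obtained by differentiating the kernel $K(\tau_1,\tau_2)=2\gamma^2/(e^{\tau_1}+e^{\tau_2})$.

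For the integral representation I would use the identity $1/(a+b)=\int_0^\infty e^{-(a+b)u}\,du$ for $a,b>0$ to rewrite
$$\frac{2\gamma^2}{e^{\tau_1}+e^{\tau_2}} = 2\gamma^2\int_0^\infty e^{-e^{\tau_1}u}\,e^{-e^{\tau_2}u}\,du.$$
On a probability space carrying a white noise $\xi$ on $(0,\infty)$, the process
$$\tilde R(\tau) := \lambda\,\Gamma(1-\rho)\,2^\rho\,e^{-\rho\tau} + \gamma\sqrt{2}\int_0^\infty e^{-e^\tau u}\,\xi(du)$$
is Gaussian with the same mean and covariance as $R$, hence a version of $R$ in finite-dimensional distribution. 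The kernel $f(\tau,u)=e^{-e^\tau u}$ is $\cc^\infty$ in $\tau$ with $\partial_\tau^n f(\tau,u)=P_n(ue^\tau)e^{-e^\tau u}$ for explicit polynomials $P_n$, and all these derivatives lie in $L^2((0,\infty),du)$ uniformly on compact $\tau$-intervals because $\int_0^\infty (ue^\tau)^{2k}e^{-2e^\tau u}\,du<\infty$ for every $k$. Defining $\tilde R^{(n)}(\tau)$ formally by differentiating the integrand $n$ times (and adding $\mu^{(n)}(\tau)$) produces a Gaussian process with continuous covariance function and $\EE[(\tilde R^{(n)}(\tau+h)-\tilde R^{(n)}(\tau))^2]=O(h^2)$, so Kolmogorov's continuity theorem supplies a continuous modification. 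A stochastic Fubini then gives $\tilde R^{(n)}(\tau+h)-\tilde R^{(n)}(\tau)=\int_\tau^{\tau+h}\tilde R^{(n+1)}(s)\,ds$ in $L^2$, and by continuity both sides agree almost surely; iterating over $n$ and intersecting countably many null sets yields a single modification of $R$ whose paths are infinitely differentiable with $n$-th derivative $\tilde R^{(n)}$.

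For the joint distribution, $(R(\tau),R'(\tau))$ is jointly Gaussian because classical differentiation is a bounded linear operation on the Gaussian span. Writing $\mu(\tau)=\lambda\Gamma(1-\rho)2^\rho e^{-\rho\tau}$ gives $\mu'(\tau)=-\rho\mu(\tau)$, hence the mean vector $\mu(\tau)(1,-\rho)^T$ in the theorem. For the covariance, $K(\tau,\tau)=\gamma^2 e^{-\tau}$; the cross-covariance $\partial_{\tau_2}K$ at $\tau_1=\tau_2=\tau$ equals $-\gamma^2 e^{-\tau}/2$; and $\partial_{\tau_1}\partial_{\tau_2}K$ at $\tau_1=\tau_2=\tau$ equals $\gamma^2 e^{-\tau}/2$, matching the claimed covariance matrix after factoring out $\gamma^2/e^\tau$. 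The main obstacle, and the step I would spend most care on, is the transition from mean-square smoothness of the covariance (which follows for free from smoothness of $K$) to almost-sure classical smoothness of sample paths: this is precisely where the Kolmogorov bound on every derivative process $\tilde R^{(n)}$ and the pathwise fundamental-theorem-of-calculus identity above do the real work.
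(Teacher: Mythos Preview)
Your argument is correct, but it takes a substantially more hands-on route than the paper. The paper's proof is essentially a one-line citation: it invokes the classical result (attributed to Lo\`eve) that a Gaussian process whose covariance kernel is $\cc^\infty$ in a neighborhood of the diagonal admits a $\cc^\infty$ modification, and that the covariance of derivatives is obtained by differentiating the kernel, i.e.\ $\Cov[R^{(l)}(\tau),R^{(l')}(\tau)] = \partial_u^{\,l}\partial_v^{\,l'} K(u,v)\big|_{u=v=\tau}$. Since $K(\tau_1,\tau_2)=2\gamma^2/(e^{\tau_1}+e^{\tau_2})$ is manifestly smooth on $\RR^2$, the smoothness claim is immediate, and the joint law of $(R,R')$ follows from the same formula plus the trivial differentiation of the mean.

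What you do instead is effectively \emph{reprove} that abstract theorem for this particular kernel: your Laplace-type identity $1/(a+b)=\int_0^\infty e^{-(a+b)u}\,du$ furnishes an explicit white-noise representation with a $\cc^\infty$ integrand, and your Kolmogorov-continuity plus stochastic-Fubini loop is exactly the machinery one would use to pass from mean-square smoothness to pathwise smoothness. This buys you self-containedness and a concrete smooth version, at the cost of considerably more work; the paper trades that work for a citation. Your covariance computation for $(R,R')$ is the same as the paper's, just arrived at from the integral side rather than by quoting the derivative-of-kernel formula directly.
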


In light of these results, we should expect the RBM estimator to have fairly smooth sample paths even for finite $n$. This is consistent with our observation in section \ref{sec:case} that the RBM estimator oscillates much less than either the Hill or the smooHill estimators.

\section{Threshold Selection}
\label{sec:opt}

Selecting a good tuning parameter $k$ for the Hill estimator is a classic problem in extreme value theory. Both the Hill and the RBM estimators have high variance at small $k$, and may be quite biased at high $k$. A successful choice of $k$ hinges on adequately balancing the bias and variance terms. Although the tuning parameter $k$ is integrated fairly differently in the Hill and RBM estimators, a given choice of $k$ has very similar effects on both estimators, and so our threshold selection heuristic should be read in light of the literature on threshold selection for the Hill estimator.

Most approaches to selecting $k$ require implicitly or explicitly estimating the second-order parameter $\rho$. \citet{danielsson2001using} and \citet{hall1990using} suggest using various sub-sample bootstraps to estimate the MSE-minimizing threshold in smaller samples. Transforming this small sample threshold into a full sample threshold, however, requires knowledge of $\rho$. \citet{hall1990using} recommends just using $\rho = -1$, while \citet{danielsson2001using} use auxiliary bootstraps to estimate the correct transformation coefficient.

\citet{drees1998selecting} suggest a procedure based on a law of the iterated logarithm, which also requires fitting $\rho$. Finally, \citet{beirlant2002exponential} advocate plugging a consistent estimator for $\rho$ into a formula for the optimal value of $k$ given by \citet{hall1985adaptive}.

An alternative approach to threshold selection aims to stop just before the smallest value of $k$ at which bias can be detected. \citet{hill1975simple} originally suggested picking $k$ just before the log spacings between consecutive order statistics fail a test for exponentiality. This test, however, was shown by \citet{hall1985adaptive} to be too lenient, and to produce estimates $\gh$ that were excessively biased. \citet{guillou2001diagnostic} remedy this problem by developing a way to jointly test for bias among high-order log spacings. The approach advocated by \citet{guillou2001diagnostic} does not require fitting $\rho$. This is a considerable benefit, since getting accurate estimates for $\rho$ is not practical in many applications.

We suggest a threshold selection rule for the RBM estimator that is similar in spirit to this second class of alternatives, in that it aims to select a threshold just before significant bias starts to appear. However, instead of stopping just before bias can be detected at a given significance level, we aim to minimize possible bias in a Bayesian sense.

As motivation for the proposed procedure, consider the RBM process $R(\tau)$ discussed in section \ref{sec:process}. From Theorem \ref{theo:process} we know that, if $\EE[R(\tau)] = b(\tau)$ is the bias at $\tau$, then
\begin{equation}
\label{eq:rp}
R'(\tau) \eqd \nn\left(-\rho b(\tau), \frac{\gamma^2}{2e^\tau}\right).
\end{equation}
This suggests using $R'(\tau)$ as a proxy for estimating bias. For heurstic motivation, suppose that for a fixed $\tau$, $b(\tau)$ is considered random with a uniform (improper) prior on $\RR$. Then, using \eqref{eq:rp}, we find that $b(\tau)$ has a posterior distribution
$$ \law\left[b(\tau) | R'(\tau)\right] \eqd \nn\left(\frac{R'(\tau)}{-\rho}, \frac{\gamma^2}{2\rho^2 e^\tau}\right), $$
and so
$$ \EE[b^2(\tau)|R'(\tau)] = \frac{2R'(\tau)^2 + \gamma^2e^{-\tau}}{2\rho^2}. $$
We then select
\begin{align}
\label{eq:opt}
\hat{\tau} &= \argmin_\tau \EE[b^2(\tau)|R'(\tau)] \\ \notag
&= \argmin_\tau R'(\tau)^2 + \frac{\gamma^2}{2e^\tau}.
\end{align}
We thus aim to select the value of $\tau$ that gives us least cause to suspect bias, rather than the first $\tau$ at which we must suspect bias. 

Although the threshold rule $\hat{\tau}$ was motivated fairly heuristically, it works well in our experiments. We begin our analysis of the threshold selection rule with a weak but important result.

\begin{theo}
\label{theo:thresh}
Let $R(\tau)$ be a RBM process satisfying \eqref{defi:process}, and let $\hat{\tau}$ be the threshold selected according to the rule described in \eqref{eq:opt}. Then
$\hat{\tau}$ is almost surely finite,
implying that the relative regret
\begin{equation}
\label{eq:thresh_rat}
\frac{R^2(\hat{\tau}) }{ \EE[R^2(\tau^*)]}
\end{equation}
of our adaptive procedure has a non-degenerate distribution, where $\tau^*$ is the optimal threshold, i.e., $\tau^* = \argmin_\tau \EE[R^2(\tau)]$.
\end{theo}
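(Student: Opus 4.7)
Write $f(\tau) := R'(\tau)^2 + \tfrac{\gamma^2}{2}e^{-\tau}$ for the objective in \eqref{eq:opt}. By Theorem~\ref{theo:process} the RBM process admits a $\cc^\infty$ modification, so $f$ has a.s.\ continuous sample paths, and my plan is to show that $f$ is a.s.\ coercive on $\RR$: continuity then forces the infimum to be attained on a bounded random set, giving $\hat\tau<\infty$ a.s. The regret ratio \eqref{eq:thresh_rat} then inherits a non-degenerate distribution, since $R^2(\hat\tau)$ is a genuine non-constant random variable while $\EE[R^2(\tau^\ast)]$ is a positive constant (well-defined because $\tau^\ast$ is finite precisely when $\lambda\neq0$, which I treat as the implicit regime of interest). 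The side $\tau\to-\infty$ is immediate and deterministic: $f(\tau)\geq\tfrac{\gamma^2}{2}e^{-\tau}\to\infty$.

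The substantive side is $\tau\to+\infty$. Theorem~\ref{theo:process} identifies $R'(\tau)$ as Gaussian with mean $\mu(\tau):=-\rho\,\lambda\,\Gamma(1-\rho)\,2^{\rho}e^{-\rho\tau}$ and variance $\sigma^2(\tau):=\gamma^2/(2e^\tau)$; since $\rho<0$ and $\lambda\neq0$, $|\mu(\tau)|$ grows exponentially while $\sigma(\tau)$ decays exponentially, and $|\mu(\tau)|/\sigma(\tau)\to\infty$ exponentially. First, on the integer grid $\tau_j=j$ I would apply Borel--Cantelli to the Gaussian tail probabilities $\PP(|R'(j)-\mu(j)|>|\mu(j)|/2)$, which are summable by the standard Gaussian bound, to obtain $R'(j)^2\to\infty$ a.s. Second, on each unit window $[j,j+1]$ I would control the supremum of the centred Gaussian process $R'-\mu$ using a Borell--TIS or Dudley entropy estimate, based on the covariance $\Cov[R'(s),R'(t)]=4\gamma^2 e^{s+t}/(e^s+e^t)^3$ obtained by differentiating the RBM covariance twice. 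Because $|\mu|$ varies by only a bounded multiplicative factor over any unit interval while the expected window-supremum of $|R'-\mu|$ is $O(e^{-j/2}\sqrt{j})$, these two steps combine to give $\inf_{\tau\in[j,j+1]}R'(\tau)^2\to\infty$ a.s., hence $f(\tau)\to\infty$.

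The main obstacle is the uniform-on-intervals estimate in the second step: the supremum of the Gaussian oscillation over a unit window must be beaten by the growth of $|\mu|$, so the grid width and the concentration bound have to be chosen consistently. Everything else is a routine combination of pointwise Gaussian tail bounds with the sample-path regularity supplied by Theorem~\ref{theo:process}, after which $\hat\tau\in(-\infty,\infty)$ a.s.\ follows from coercivity and continuity, and non-degeneracy of $R^2(\hat\tau)/\EE[R^2(\tau^\ast)]$ is immediate.
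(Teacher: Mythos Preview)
Your plan is correct and shares the paper's overall strategy: establish almost-sure coercivity of $f(\tau)=R'(\tau)^2+\tfrac{\gamma^2}{2}e^{-\tau}$, with the $\tau\to-\infty$ side deterministic and the $\tau\to+\infty$ side reducing to $|R'(\tau)|\to\infty$ a.s. The substantive difference is in how the latter is obtained. The paper argues via Rice's formula: it computes the joint law of $(Y_\tau,Y'_\tau)$ for the centred process $Y=R'-\EE[R']$ (finding $\Var[Y'_\tau]=\tfrac{\gamma^2}{2}e^{-\tau}$ as well), bounds the expected number of crossings of the level $-1$ by $Y$ on $[T,\infty)$ by an explicit convergent integral, and concludes by Markov that eventually $Y_\tau>-1$, hence $R'_\tau\to\infty$. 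Your route---pointwise Borel--Cantelli on an integer grid plus Borell--TIS/Dudley control of the centred oscillation over unit windows---is more elementary and avoids the crossing machinery, at the cost of assembling two estimates instead of one; it also handles both signs of $\lambda$ uniformly since you work with $|\mu|$, whereas the paper's write-up tacitly takes $\lambda>0$. Both arguments rest on the same covariance calculation for $R'$ and the same exponential scale separation $|\mu(\tau)|\asymp e^{-\rho\tau}$ versus $\sigma(\tau)\asymp e^{-\tau/2}$. Your observation that $\lambda\neq 0$ is implicitly required is correct and applies equally to the paper's proof (otherwise $\EE[R'_\tau]\equiv 0$ and $\tau^\ast$ is undefined).
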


\begin{figure}[t]
\centering
\subfigure[Distribution of $\hat{\tau} - \tau^*$.]{
\includegraphics[width=0.45\textwidth]{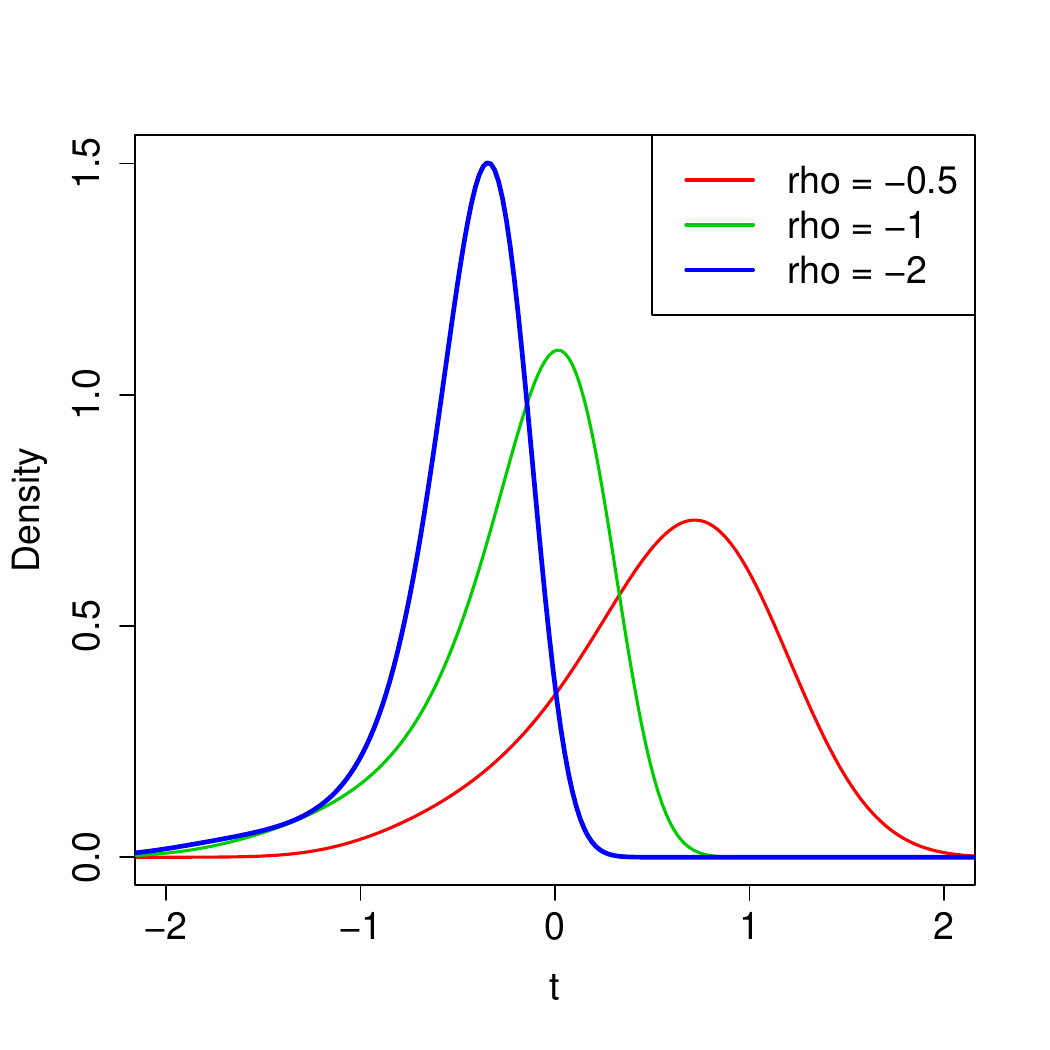}
\label{fig:tdist}
}
\subfigure[Relative regret of $\gh(\hat{\tau})$.]{
\includegraphics[width=0.45\textwidth]{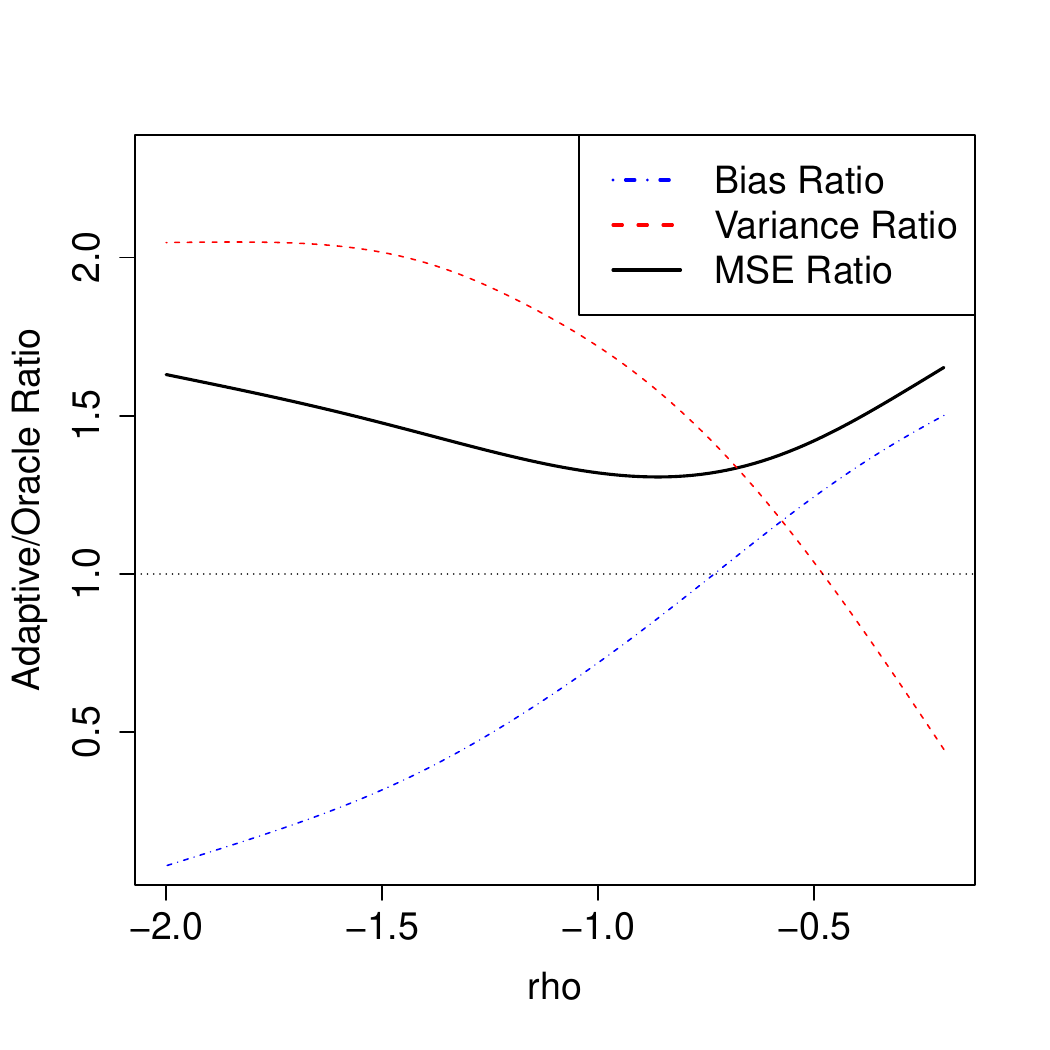}
\label{fig:MSE_rat}
}
\caption{Evaluating the threshold selection rule. The left panel shows the distribution of the adaptive threshold $\hat{\tau}$ relative to the optimal threshold $\tau^*$ as we vary the second-order parameter $\rho$. The right panel compares the loss from using the adaptive estimate $\gh(\hat{\tau})$  instead of the oracle $\gh(\tau^*)$. All simulations were run along the surface $\rho = -2\gamma$, and were scaled to $\lambda = 1$.}
\label{fig:thresh}
\end{figure}

We can get a better handle on the ratio in \eqref{eq:thresh_rat} by simulation. The behavior of $\hat{\tau}$ appears to depend largely on $\rho$. As shown in Figure \ref{fig:tdist}, $\hat{\tau}$ tends to be somewhat larger than the optimal value $\tau^*$ when $\rho$ is near 0, leading to lower variance but a slight excess in bias. The opposite is true when $\rho$ is very small. In Figure \ref{fig:MSE_rat}, we display the ratio from \eqref{eq:thresh_rat} along the surface $\rho = -2\gamma$.

Our optimization rule in \eqref{eq:opt} takes the form of an intuitive penalized optimization problem. Broadly speaking, the procedure tries to select a point $\tau$ such that $R_\tau'$ is small, since low $R_\tau'$ equates to low bias. However, low values of $\tau$ are plagued by high variability, and so we penalize small values of $\tau$. This procedure seems to mimic the strategy a practitioner might use in selecting $k$ from a Hill plot, and so we may hope that, even when the second-order condition does not hold or large third order effects are present, this rule will still give reasonable recommended thresholds.

We end this section on a note of caution: the relation \eqref{eq:rp} only holds in the tail region of the distribution. Thus, if we let $k$ grow large enough that the RBM estimator starts to use substantial amounts of non-tail data, our heuristic can fail badly.\footnote{ To witness such a failure, one can try applying the RBM estimator on 10'000 datapoints drawn from a Student-$t$ distribution with 2 degrees of freedom and a mean offset of +3. The rule from \eqref{eq:opt} will systematically pick a value of $k$ that is much too large.}
One way to avoid such a problem is, as discussed in section \ref{sec:case}, to pre-filter our data and to only give the RBM estimator datapoints that are in the tail area of the distribution. Theoretically, we should only give the RBM estimator the $n_0$ largest data points from a sample of size $n$, such that $n_0/n \rightarrow 0$ (e.g., take $n_0 = n^{0.9}$). In practice, we could decide to only use points that are above the mode of a coarse histogram of the data. Thankfully, such filtering should not cost us much, as the right-hand tail is the only part of the distribution that contains information that is relevant for estimating $\gamma$.

\section{Simulation Study}
\label{sec:simulation}

In this section, we run simulations to test our RBM estimator against three other estimators for $\gamma$. The benchmark estimators are all threshold selection rules for the Hill estimator, and are described in detail in \citet{beirlant2004statistics}. We compare

\begin{itemize}
\item $\gh_{RBM}$: Our RBM estimator, with threshold selection implemented as in \eqref{eq:rbmopt},
\item $\gh_{BDGS}$: The plugin method from \citet{beirlant2002exponential},
\item $\gh_{DK}$: The procedure based on a law of the iterated logarithm from \citet{drees1998selecting}, and
\item $\gh_{GH}$: The diagnostic for bias from \citet{guillou2001diagnostic}.
\end{itemize}

The distributions used for testing are given below. These distributions form a superset of the distributions used for a simulation study in \citet{beirlant2002exponential}.

\begin{itemize}
\item Fr\'echet(2) with distribution $F(x) = e^{-x^{-2}}$, $\gamma = 1/2$, $\rho = -1$. We drew $N = 200$ datapoints from this distribution.
\item Burr(1, 0.5, 2) with distribution $1 - (1 + \sqrt{x})^{-2}$, $\gamma = 1$, $\rho = -1/2$. We drew $N = 500$ datapoints from this distribution.
\item Student-$t(df)$ with $df = 3$ and 6 degrees of freedom, $\gamma = 1/df$, $\rho = -2/df$. We drew $N = 500$ datapoints from these distributions.
\item Log-Gamma(2, 1) with density $f(x) = x^{-2}\log(x)$, $\gamma = 1$, $\rho = 0$. We drew $N = 500$ datapoints from this distribution.
\item A distribution defined by $X = U^{-2} ( 1 - \log U)$ where $U$ is uniform on $[0, \, 1]$, with $\gamma = 2$ and $\rho = 0$. We drew $N = 500$ datapoints from this distribution.
\end{itemize}

Simulation results are given in Table \ref{tab:results}. All numbers were estimated using 4000 replications. Non-positive datapoints arising with the Student-$t$ distribution were discarded, as discussed in section \ref{sec:case}.

\begin{table}[t]
\caption{\label{tab:results}Comparison of root mean squared error (RMSE) and bias for four estimators. Standard sampling errors ($\times 10^{-3}$) are indicated in parentheses.}
\centering
\begin{tabular}{*{2}{r}|*{4}{l}|}
\hline
Distribution & & $\gh_{RBM}$ & $\gh_{BDGS}$ & $\gh_{DK}$ & $\gh_{GH}$ \\ \hline
Fr\'echet & RMSE & 0.116 (2) & 0.142 (4) & {\bf 0.087} (1) & 0.102 (1) \\
 & Bias & 0.011 (2) & {\bf -0.004} (2) & 0.035 (1) & 0.044 (1) \\ \hline
Burr & RMSE & {\bf 0.334} (3) & 0.442 (3) & 0.344 (3) & 0.382 (3) \\
 & Bias & {\bf 0.129} (5) & 0.410 (3) & 0.261 (4) & 0.333 (3) \\ \hline
Student-$t$(3) & RMSE & 0.119 (2) & {\bf 0.113} (2) & 0.12 (2) & 0.145 (2) \\ 
   & Bias & {\bf 0.034} (4) & 0.08 (3) & 0.089 (3) & 0.123 (2) \\  \hline
Student-$t$(6) & RMSE & {\bf 0.112} (1) & 0.145 (1) & 0.149 (1) & 0.178 (1) \\
 & Bias & {\bf 0.074} (1) & 0.130 (1) & 0.134 (1) & 0.168 (1) \\ \hline
Log-Gamma & RMSE & 0.293 (2) & {\bf 0.258} (3) & 0.327 (2) & 0.287 (2) \\
 & Bias & 0.215 (3) & {\bf 0.182} (3) & 0.301 (2) & 0.238 (3) \\ \hline
$U^{-2} ( 1 - \log U)$ & RMSE & 0.434 (4) & {\bf 0.399} (8) & 0.452 (4) & 0.404 (6) \\ 
   & Bias & 0.363 (5) & {\bf 0.218} (7) & 0.416 (4) & 0.284 (7) \\ \hline
\end{tabular}
\end{table}

We see that the RBM estimator is overall competitive with the other tested estimators in terms of MSE: RBM performs particularly well for both the Burr and the Student-$t$, and behaves reasonably for the rest. The main exception to this pattern are the last two distributions with $\rho = 0$, for which $\gh_{BDGS}$ performs well. 

We note in particular that $\gh_{RBM}$ is less biased than either $\gh_{GH}$ or $\gh_{DK}$ for the surveyed distributions. At equal MSE, having low bias may be advantageous since variance terms are often easier to estimate than bias terms which depend on second-order parameters, and since systematic bias across multiple experiments may be hard to detect.

\section{Conclusion}
\label{sec:conclusions}

In this paper, we presented a new estimator for the tail index of a distribution in the Fr\'echet domain of attraction. The estimator arose from studying the maxima of randomly generated subsamples, but can also be described as an infinite order $U$-statistic taken over the Hill estimator. The main advantage of our RBM estimator in comparison with existing methods lies in its stability and ease of use. While most commonly used estimators are extremely sensitive to small changes in the tuning parameter $k$, the RBM estimator is stable with respect to $k$. And, while most other estimators require either manually choosing the threshold or fitting a complicated auxiliary model for $k$, the RBM framework admits a simple, intuitive, and largely automatic heuristic for threshold selection. Although the results proved in this paper are asymptotic, we saw in section \ref{sec:case} that the advantages of the RBM estimator are apparent in finite samples.

More generally, this paper presents a new approach to constructing and finding the limiting distribution of tail index estimators. The asymptotic behavior of many classical estimators can be established using results from, e.g., \citet{drees1998smooth} on the convergence of tail empirical processes. In the present work, however, we took a different approach and studied convergence using H\'ajek projections and infinite order $U$-statistics. There are multiple opportunities to tackle further problems in extreme value theory using similar methods. In particular, it should be possible to construct a bias-corrected version of the RBM estimator by mirroring ideas from \citet{gomes2008tail}, to establish an RBM-type process which would permit estimation of a general tail index $\gamma \in \RR$, and to use similar subsampling ideas in the study of multivariate extremes.

\section*{Acknowledgment}

I am grateful to Val\'erie Chavez-Demoulin, Bradley Efron, Jonathan Taylor, Julie Tibshirani, Suzanne de Treville, Guenther Walther and two anonymous referees for comments and helpful suggestions. This work was supported by a B. C. and E. J. Eaves Stanford Graduate Fellowship.

\bibliographystyle{plainnat}
\bibliography{references}

\begin{thebibliography}{42}
\providecommand{\natexlab}[1]{#1}
\providecommand{\url}[1]{\texttt{#1}}
\expandafter\ifx\csname urlstyle\endcsname\relax
  \providecommand{\doi}[1]{doi: #1}\else
  \providecommand{\doi}{doi: \begingroup \urlstyle{rm}\Url}\fi

\bibitem[Adler and Taylor(2007)]{adler2007random}
R.~Adler and J.~Taylor.
\newblock \emph{Random fields and geometry}.
\newblock Springer, 2007.

\bibitem[Beirlant et~al.(1999)Beirlant, Dierckx, Goegebeur, and
  Matthys]{beirlant1999tail}
J.~Beirlant, G.~Dierckx, Y.~Goegebeur, and G.~Matthys.
\newblock Tail index estimation and an exponential regression model.
\newblock \emph{Extremes}, 2\penalty0 (2):\penalty0 177--200, 1999.

\bibitem[Beirlant et~al.(2002)Beirlant, Dierckx, Guillou, and
  St{\u{a}}ric{\u{a}}]{beirlant2002exponential}
J.~Beirlant, G.~Dierckx, A.~Guillou, and C.~St{\u{a}}ric{\u{a}}.
\newblock On exponential representations of log-spacings of extreme order
  statistics.
\newblock \emph{Extremes}, 5\penalty0 (2):\penalty0 157--180, 2002.

\bibitem[Beirlant et~al.(2004)Beirlant, Goegebeur, Segers, and
  Teugels]{beirlant2004statistics}
J.~Beirlant, Y.~Goegebeur, J.~Segers, and J.~Teugels.
\newblock \emph{Statistics of extremes: theory and applications}.
\newblock John Wiley \& Sons Inc, 2004.

\bibitem[Bertail et~al.(2004)Bertail, Haefke, Politis, and
  White]{bertail2004subsampling}
P.~Bertail, C.~Haefke, D.~N. Politis, and H.~White.
\newblock Subsampling the distribution of diverging statistics with
  applications to finance.
\newblock \emph{Journal of Econometrics}, 120\penalty0 (2):\penalty0 295--326,
  2004.

\bibitem[Billingsley(1968)]{billingsley1968convergence}
P.~Billingsley.
\newblock \emph{Convergence of Probability Measures}.
\newblock Wiley, New York, 1968.

\bibitem[Breiman(1996)]{breiman1996bagging}
L.~Breiman.
\newblock Bagging predictors.
\newblock \emph{Machine learning}, 24\penalty0 (2):\penalty0 123--140, 1996.

\bibitem[B{\"u}hlmann and Yu(2002)]{buhlmann2002analyzing}
P.~B{\"u}hlmann and B.~Yu.
\newblock Analyzing bagging.
\newblock \emph{The Annals of Statistics}, 30\penalty0 (4):\penalty0 927--961,
  2002.

\bibitem[Buja and Stuetzle(2006)]{buja2006observations}
A.~Buja and W.~Stuetzle.
\newblock Observations on bagging.
\newblock \emph{Statistica Sinica}, 16\penalty0 (2):\penalty0 323, 2006.

\bibitem[Caeiro et~al.(2005)Caeiro, Gomes, and Pestana]{caeiro2005direct}
F.~Caeiro, M.~Gomes, and D.~Pestana.
\newblock Direct reduction of bias of the classical hill estimator.
\newblock \emph{Revstat}, 3\penalty0 (2):\penalty0 113--136, 2005.

\bibitem[Coles(2001)]{coles2001introduction}
S.~Coles.
\newblock \emph{An introduction to statistical modeling of extreme values}.
\newblock Springer Verlag, 2001.

\bibitem[Cs\"org\H{o} et~al.(1985)Cs\"org\H{o}, Deheuvels, and
  Mason]{csorgo1985kernel}
S.~Cs\"org\H{o}, P.~Deheuvels, and D.~Mason.
\newblock Kernel estimates of the tail index of a distribution.
\newblock \emph{The Annals of Statistics}, 13\penalty0 (3):\penalty0
  1050--1077, 1985.

\bibitem[Danielsson et~al.(2001)Danielsson, de~Haan, Peng, and
  de~Vries]{danielsson2001using}
J.~Danielsson, L.~de~Haan, L.~Peng, and C.G. de~Vries.
\newblock Using a bootstrap method to choose the sample fraction in tail index
  estimation.
\newblock \emph{Journal of Multivariate analysis}, 76\penalty0 (2):\penalty0
  226--248, 2001.

\bibitem[de~Haan and Ferreira(2006)]{dehaan2006extreme}
L.~de~Haan and A.~Ferreira.
\newblock \emph{Extreme value theory: an introduction}.
\newblock Springer Verlag, 2006.

\bibitem[{d}e Treville et~al.(2014){d}e Treville, Petty, and
  Wager]{petty2012economies}
S.~{d}e Treville, J.~S. Petty, and S.~Wager.
\newblock Economies of extremes: Lessons from venture-capital decision making.
\newblock \emph{Journal of Operations Management}, 32\penalty0 (6):\penalty0
  387--398, 2014.

\bibitem[Dekkers et~al.(1989)Dekkers, Einmahl, and de~Haan]{dekkers1989moment}
A.~Dekkers, J.~Einmahl, and L.~de~Haan.
\newblock A moment estimator for the index of an extreme-value distribution.
\newblock \emph{The Annals of Statistics}, 17\penalty0 (4):\penalty0
  1833--1855, 1989.

\bibitem[Dombry(2013)]{dombry2013maximum}
C.~Dombry.
\newblock Maximum likelihood estimators for the extreme value index based on
  the block maxima method.
\newblock \emph{arXiv preprint arXiv:1301.5611}, 2013.

\bibitem[Drees(1998)]{drees1998smooth}
H.~Drees.
\newblock On smooth statistical tail functionals.
\newblock \emph{Scandinavian Journal of Statistics}, 25\penalty0 (1):\penalty0
  187--210, 1998.

\bibitem[Drees and Kaufmann(1998)]{drees1998selecting}
H.~Drees and E.~Kaufmann.
\newblock Selecting the optimal sample fraction in univariate extreme value
  estimation.
\newblock \emph{Stochastic Processes and their Applications}, 75\penalty0
  (2):\penalty0 149--172, 1998.

\bibitem[Drees et~al.(2000)Drees, de~Haan, and Resnick]{drees2000make}
H.~Drees, L.~de~Haan, and S.~Resnick.
\newblock How to make a {H}ill plot.
\newblock \emph{The Annals of Statistics}, 28\penalty0 (1):\penalty0 254--274,
  2000.

\bibitem[Efron and Stein(1981)]{efron1981jackknife}
B.~Efron and C.~Stein.
\newblock The jackknife estimate of variance.
\newblock \emph{The Annals of Statistics}, 9\penalty0 (3):\penalty0 586--596,
  1981.

\bibitem[Ferreira and de~Haan(2013)]{ferreira2013block}
A.~Ferreira and L.~de~Haan.
\newblock On the block maxima method in extreme value theory.
\newblock \emph{arXiv preprint arXiv:1310.3222}, 2013.

\bibitem[Feuerverger and Hall(1999)]{feuerverger1999estimating}
A.~Feuerverger and P.~Hall.
\newblock Estimating a tail exponent by modelling departure from a {P}areto
  distribution.
\newblock \emph{The Annals of Statistics}, 27\penalty0 (2):\penalty0 760--781,
  1999.

\bibitem[Gardes and Girard(2008)]{gardes2008moving}
L.~Gardes and S.~Girard.
\newblock A moving window approach for nonparametric estimation of the
  conditional tail index.
\newblock \emph{Journal of Multivariate Analysis}, 99\penalty0 (10):\penalty0
  2368--2388, 2008.

\bibitem[Gomes and Pestana(2007)]{gomes2007sturdy}
M.~Gomes and D.~Pestana.
\newblock A sturdy reduced-bias extreme quantile {VaR} estimator.
\newblock \emph{Journal of the American Statistical Association}, 102\penalty0
  (477), 2007.

\bibitem[Gomes et~al.(2000)Gomes, Martins, and Neves]{gomes2000alternatives}
M.~Gomes, M.~Martins, and M.~Neves.
\newblock Alternatives to a semi-parametric estimator of parameters of rare
  events---the jackknife methodology.
\newblock \emph{Extremes}, 3\penalty0 (3):\penalty0 207--229, 2000.

\bibitem[Gomes et~al.(2008)Gomes, de~Haan, and Rodrigues]{gomes2008tail}
M.~Gomes, L.~de~Haan, and L.~Rodrigues.
\newblock Tail index estimation for heavy-tailed models: accommodation of bias
  in weighted log-excesses.
\newblock \emph{Journal of the Royal Statistical Society: Series B (Statistical
  Methodology)}, 70\penalty0 (1):\penalty0 31--52, 2008.

\bibitem[Guillou and Hall(2001)]{guillou2001diagnostic}
A.~Guillou and P.~Hall.
\newblock A diagnostic for selecting the threshold in extreme value analysis.
\newblock \emph{Journal of the Royal Statistical Society: Series B (Statistical
  Methodology)}, 63\penalty0 (2):\penalty0 293--305, 2001.

\bibitem[Gumbel(1958)]{gumbel1958statistics}
E.~Gumbel.
\newblock Statistics of extremes.
\newblock \emph{Columbia Univ. press, New York}, 1958.

\bibitem[Hall(1990)]{hall1990using}
P.~Hall.
\newblock Using the bootstrap to estimate mean squared error and select
  smoothing parameter in nonparametric problems.
\newblock \emph{Journal of Multivariate analysis}, 32\penalty0 (2):\penalty0
  177--203, 1990.

\bibitem[Hall and Welsh(1985)]{hall1985adaptive}
P.~Hall and A.~H. Welsh.
\newblock Adaptive estimates of parameters of regular variation.
\newblock \emph{The Annals of Statistics}, 13\penalty0 (1):\penalty0 331--341,
  1985.

\bibitem[Hill(1975)]{hill1975simple}
B.~Hill.
\newblock A simple general approach to inference about the tail of a
  distribution.
\newblock \emph{The Annals of Statistics}, 3\penalty0 (5):\penalty0 1163--1174,
  1975.

\bibitem[Hoeffding(1948)]{hoeffding1948class}
W.~Hoeffding.
\newblock A class of statistics with asymptotically normal distribution.
\newblock \emph{The Annals of Mathematical Statistics}, 19\penalty0
  (3):\penalty0 293--325, 1948.

\bibitem[Lo{\`e}ve(1948)]{loeve1948fonctions}
P.~Lo{\`e}ve.
\newblock Fonctions al{\'e}atoires du second ordre.
\newblock In P.~L\'evy, editor, \emph{Processus stochastiques et mouvement
  Brownien}. Gauthier-Villars, 1948.

\bibitem[Mason and Turova(1994)]{mason1994weak}
D.~Mason and T.~Turova.
\newblock Weak convergence of the {H}ill estimator process.
\newblock \emph{Extreme Value Theory and Applications}, pages 419--432, 1994.

\bibitem[McElroy and Politis(2007)]{mcelroy2007computer}
T.~McElroy and D.~N Politis.
\newblock Computer-intensive rate estimation, diverging statistics and
  scanning.
\newblock \emph{The Annals of Statistics}, 35\penalty0 (4):\penalty0
  1827--1848, 2007.

\bibitem[Peng(1998)]{peng1998asymptotically}
L.~Peng.
\newblock Asymptotically unbiased estimators for the extreme-value index.
\newblock \emph{Statistics \& Probability Letters}, 38\penalty0 (2):\penalty0
  107--115, 1998.

\bibitem[Politis et~al.(1999)Politis, Romano, and Wolf]{politis1999subsampling}
D.~N. Politis, J.~P. Romano, and M.~Wolf.
\newblock \emph{Subsampling}.
\newblock Springer Series in Statistics. Springer New York, 1999.

\bibitem[Potter(1942)]{potter1942mean}
H.~Potter.
\newblock The mean values of certain {D}irichlet series, {II}.
\newblock \emph{Proceedings of the London Mathematical Society}, 2\penalty0
  (1):\penalty0 1--19, 1942.

\bibitem[R{\'e}nyi(1953)]{renyi1953theory}
A.~R{\'e}nyi.
\newblock On the theory of order statistics.
\newblock \emph{Acta Mathematica Hungarica}, 4\penalty0 (3):\penalty0 191--231,
  1953.

\bibitem[Resnick(2007)]{resnick2007heavy}
S.~Resnick.
\newblock \emph{Heavy-tail phenomena: probabilistic and statistical modeling}.
\newblock Springer Verlag, 2007.

\bibitem[Resnick and St{\u{a}}ric{\u{a}}(1997)]{resnick1997smoothing}
S.~Resnick and C.~St{\u{a}}ric{\u{a}}.
\newblock Smoothing the {H}ill estimator.
\newblock \emph{Advances in Applied Probability}, 29:\penalty0 271--293, 1997.

\end{thebibliography}

\newpage
\section{Appendix: Proofs}
\label{sec:appendix}

In the following results, we use the notation $U(t)$ for the inverse quantile function as defined in \eqref{eq:ut}.
It can be shown \citep[e.g.,][section 1.2]{dehaan2006extreme} that the distribution $F$ has extreme value index $\gamma > 0$ if and only if $U$ is a regularly varying function of index $\gamma$, i.e. $\lim_{t \rightarrow \infty} U(tx)/U(t) = x^\gamma$ for all $x > 0$. We also use $[n]$ for the set $\{1, ..., n\}$, and
$ \EE[X; A] = \int X 1_A \, d\PP $
where $A$ is a set and $1_A$ its indicator function.

\subsection{Preparatory Lemmas}

\begin{lemm}
\label{lemm:hill}

Let $X_1, ..., X_s$ be drawn \emph{iid} from a distribution $F$ of strictly positive support with extreme value index $\gamma > 0$. Then the first Hill estimator $H_1^{(s)}$ from \eqref{eq:h1} converges in distribution to an exponential random variable with mean $\gamma$. Moreover, if there is a constant $\beta > 0$ such that \eqref{eq:technical} holds,
then all moments of $H_1^{(s)}$ converge to the corresponding moments of the limiting random variable. In particular,
$$ \lims \EE\left[H_1^{(s)}\right] = \gamma \text{ and } \lims \Var\left[H_1^{(s)}\right] = \gamma^2. $$

\proof

In terms of the inverse quantile function $U(t)$ from \eqref{eq:ut}, we can write $X_k \eqd U(Y_k)$, where the $Y_k$ are drawn independently from a distribution with cdf $F_Y(y) = \frac{y - 1}{y}$ for $y > 1$. We write $Y_{1, s} \leq ... \leq Y_{s, s}$ for the order statistics of the $Y_k$.

Since $U$ is a regularly varying function of index $\gamma$, Potter's inequality \citep{potter1942mean} implies that, for any $\varepsilon > 0$, there is a $t_0$ such that, for all $t, \ tx \geq t_0$,
\begin{equation}
\label{eq:potter}
(1 - \varepsilon) x^{\gamma - \sgn[\log x] \cdot \varepsilon} < \frac{U(tx)}{U(t)} < (1 +  \varepsilon) x^{\gamma + \sgn[\log x] \cdot \varepsilon},
\end{equation}
where $\sgn$ is the sign operator. Thus, since as in Lemma \ref{lemm:small} $\lims \PP[Y_{s - 1, s} < t_0] = 0$ and since, as shown below, the terms $Y_{s, \, s} / Y_{s-1, \, s}$ are uniformly bounded, we conclude that
$$\log \left[\frac{U(Y_{s,s})}{U(Y_{s-1, s})}\right] - \gamma \cdot \log\left[\frac{Y_{s,s}}{Y_{s-1, s}}\right] \rightarrow_p 0. $$
Now, we note that the $\log Y_k$ have standard exponential distribution $Exp(1)$. By R\'enyi representation \citep{renyi1953theory}, if $E_{1, s} \leq ... \leq E_{s, s}$ are order statistics of a standard exponential distribution, the $E_{k, s}$ are jointly distributed as
\begin{equation}
\label{eq:renyi}
E_{k, s} \eqd \sum_{l = 1}^k \frac{E_l^*}{s - l + 1}, \text{ with } E_1^*, \ ..., \ E_s^* \sim Exp(1).
\end{equation}
In particular, $E_{s, s} - E_{s - 1, s}$ is exponentially distributed, and is independent from $E_{s - 1, s}$. This implies our first claim:
$$ \gamma \cdot \log\left[\frac{Y_{s,s}}{Y_{s-1, s}}\right] \eqd Exp(\gamma), \text{ and so } H_1^{(s)} \eqd   \log \left[\frac{U(Y_{s,s})}{U(Y_{s-1, s})}\right] \Rightarrow Exp(\gamma). $$
To show convergence of the $\nu^{th}$ moment, we again use Potter's inequality, which implies that for any $\varepsilon > 0$ there is a $t_0$ such that
\begin{align*}
p_s(t_0) \cdot \EE\left[\left(\log[1 - \varepsilon] + (\gamma - \varepsilon) \cdot E\right)^{\nu}\right]
&< \EE\left[\log \left[\frac{U(Y_{s,s})}{U(Y_{s-1, s})}\right]^\nu; Y_{s-1, s} > t_0\right] \\
&<p_s(t_0) \cdot \EE\left[\left(\log[1 + \varepsilon] + (\gamma + \varepsilon) \cdot E\right)^{\nu}\right],
\end{align*}
where $E \sim Exp(1)$ and $p_s(t_0) = \PP[Y_{s-1, s} > t_0]$. We recall that $p_s(t_0) \rightarrow 1$, and so in order to obtain convergence of moments it suffices to show that 
$$ \lims \EE\left[\log \left[\frac{U(Y_{s,s})}{U(Y_{s-1, s})}\right]; Y_{s-1, s} < t_0\right] = 0; $$
this follows from the second part of Lemma \ref{lemm:small}, since the technical condition near 0 holds by hypothesis.
\endproof

\end{lemm}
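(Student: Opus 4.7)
The strategy is to pass to a standard exponential scale via the quantile transform, so that the randomness in $H_1^{(s)}$ is carried by exponential order statistics, and then use the regular variation of the inverse quantile function $U$ to transfer the limit to the original scale. Write $X_k \eqd U(Y_k)$ where $Y_k$ has cdf $F_Y(y) = 1 - 1/y$ on $(1,\infty)$, so that $\log Y_k \sim Exp(1)$. Then $H_1^{(s)} \eqd \log\bigl[U(Y_{s,s})/U(Y_{s-1,s})\bigr]$, and the problem reduces to understanding this ratio along the two top order statistics of the $Y_k$.

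For the distributional statement, I would first observe via the R\'enyi representation that $\log Y_{s,s} - \log Y_{s-1,s}$ is exactly $Exp(1)$ and independent of $Y_{s-1,s}$, so if $U$ were the pure power $t \mapsto t^\gamma$ we would get $H_1^{(s)} \eqd Exp(\gamma)$ with no approximation. The general case is handled by Potter's inequality \eqref{eq:potter}: since $Y_{s-1,s} \to \infty$ in probability, for any fixed $\varepsilon > 0$ we have both $Y_{s,s}$ and $Y_{s-1,s}$ exceeding the Potter threshold $t_0$ with probability tending to one, and on that event
\[
\Bigl|\log\bigl[U(Y_{s,s})/U(Y_{s-1,s})\bigr] - \gamma \cdot \log(Y_{s,s}/Y_{s-1,s})\Bigr| \le \log(1+\varepsilon) + \varepsilon \log(Y_{s,s}/Y_{s-1,s}).
\]
Since $Y_{s,s}/Y_{s-1,s}$ has the proper limit $e^{Exp(1)}$, letting $\varepsilon \downarrow 0$ yields $H_1^{(s)} \Rightarrow Exp(\gamma)$.

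For moment convergence I would split $\EE[(H_1^{(s)})^\nu]$ along the event $\{Y_{s-1,s} > t_0\}$ versus its complement. On $\{Y_{s-1,s} > t_0\}$, Potter provides a uniform sandwich of $H_1^{(s)}$ between affine functions of a standard exponential, so the $\nu^{th}$ moment converges to that of $Exp(\gamma)$ as $t_0$ is sent to infinity slowly enough that $\PP[Y_{s-1,s} > t_0] \to 1$. The main obstacle, and the reason for the technical hypothesis \eqref{eq:technical}, is the complementary event $\{Y_{s-1,s} \le t_0\}$: here $U(Y_{s-1,s})$ can be arbitrarily close to $0$ while $U(Y_{s,s})$ is not, so the log ratio can blow up and has no a priori moment control. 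The plan is to absorb this term into a separate preparatory lemma (Lemma \ref{lemm:small}) that uses the assumption $F(x) = o(x^{1/\beta})$ near $0$ to show that the probability and size of this bad event vanish fast enough to kill the integrand. This is the technical crux of the argument; the distributional convergence and the behavior on the good event $\{Y_{s-1,s} > t_0\}$ are essentially routine consequences of regular variation and the R\'enyi representation.
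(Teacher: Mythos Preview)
Your proposal is correct and matches the paper's proof essentially step for step: the quantile transform $X_k \eqd U(Y_k)$, the R\'enyi representation giving $\log(Y_{s,s}/Y_{s-1,s}) \sim Exp(1)$ independent of $Y_{s-1,s}$, Potter bounds for both the distributional limit and the good-event moment sandwich, and the appeal to Lemma~\ref{lemm:small} under hypothesis \eqref{eq:technical} to kill the contribution from $\{Y_{s-1,s} \le t_0\}$. The only cosmetic difference is that the paper keeps $t_0$ fixed (depending on $\varepsilon$) and lets $s \to \infty$ first, rather than sending $t_0 \to \infty$ along with $s$; either ordering works here.
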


\begin{lemm}
\label{lemm:bias}
Let $X_1, ..., X_s$ be drawn from a distribution $F$ satisfying the second-order condition \eqref{eq:second}
for all $x > 0$, with some $\gamma > 0$, $\rho < 0$ and a positive or negative function $A(t)$ with $\limt A(t) = 0$. Moreover, suppose there is a constant $\beta > 0$ such that \eqref{eq:technical} holds.
Then, writing $X_{1, s} \leq ... \leq X_{s,s}$ for the order statistics of $X$, we have, for any $\alpha > 0$, that
$$ \lims \frac{\EE[\log X_{s,s} - \log X_{s-1, s}] - \gamma}{A(\alpha s)} =  \frac{\Gamma(1 - \rho)}{\alpha^\rho}. $$
\proof
As in the proof of Lemma \ref{lemm:hill}, we write $X_k \eqd U(Y_k)$ where the $Y_k$ have cdf $F_Y(y) = \frac{y - 1}{y}$ for $y \geq 1$. Since $A(t) \rightarrow 0$, the stated second-order condition is equivalent to
$$\limt \frac{\log U(tx) - \log U(t) - \gamma \log(x)}{A(t)} = \frac{x^\rho - 1}{\rho}$$
for all $x > 0$. By \citet{drees1998smooth}, there exists a function $A_0(t) \sim A(t)$ (and so without loss of generality $A_0(t) = A(t)$) such that for any $\varepsilon > 0$, there is a $t_0$ such that, for all $t > t_0$ and $x \geq 1$,
\begin{equation}
\label{eq:drees}
\left|\frac{\log U(tx) - \log U(t) - \gamma \log(x)}{A(t)} - \frac{x^\rho - 1}{\rho}\right| < \varepsilon x^{\rho + \varepsilon}.
\end{equation}
For any $ r < 1$ we find by R\'enyi representation \eqref{eq:renyi} that
$$ \EE\left[\left(\frac{Y_{s,s}}{Y_{s-1,s}}\right)^{r} \big| Y_{s-1, s} \geq t_0 \right] = \int_0^\infty e^{(r - 1)x} \ dx < \infty, $$
and so, because $\rho < 0$, we find by plugging $t = Y_{s-1,s}$ and $tx = Y_{s,s}$ into \eqref{eq:drees} that for any $\delta > 0$ there is a $t_0$ such that
\begin{equation}
\label{eq:bigg}
 \lims  \EE\left[\left(\frac{\log \left[\frac{U(Y_{s,s})}{U(Y_{s-1,s})}\right] - \gamma \log\left(\frac{Y_{s,s}}{Y_{s-1,s}}\right)}{A(Y_{s-1,s})} - \frac{\left(\frac{Y_{s,s}}{Y_{s-1,s}}\right)^\rho - 1}{\rho}\right)^2; Y_{s-1, s} \geq t_0\right] < \delta.
 \end{equation}
We now move to the case $Y_{s-1, s} < t_0$. $A(t)$ must be regularly varying \citep[e.g.,][section 2.3]{dehaan2006extreme} with index $\rho$, and so by Karamata representation we can assume without loss of generality that $A(t)$ is continuous on $[0, \infty)$ and strictily positive or strictly negative; in particular, $A(t)$ is then bounded away from 0 for finite intervals. Thus, by Lemma \ref{lemm:small}, the expression on \eqref{eq:bigg} now integrated over the set $Y_{s-1, s} < t_0$ converges to 0. From this we conclude that
$$ \lims  \EE\left[\left(\frac{\log \left[\frac{U(Y_{s,s})}{U(Y_{s-1,s})}\right] - \gamma \log\left(\frac{Y_{s,s}}{Y_{s-1,s}}\right)}{A(Y_{s-1,s})} - \frac{\left(\frac{Y_{s,s}}{Y_{s-1,s}}\right)^\rho - 1}{\rho}\right)^2\right] = 0. $$
Moreover, assuming without loss of generality that appropriate regularity conditions for $A(t)$ hold near $t = 0$, we can show along the lines of Lemma \ref{lemm:hill} that
$$ \limsup_{s \rightarrow \infty} \EE\left[\left(\frac{A(Y_{s-1,s})}{A(s)}\right)^2\right] < \infty,
\ \text{and} \
\lims \EE\left[\left(\frac{A(Y_{s-1,s})}{A(s)} - \left(\frac{Y_{s-1, s}}{s}\right)^\rho \right)^2\right] = 0. $$
Thus, using Cauchy-Schwarz, we establish that
$$ \lims  \EE\left[\frac{\log \left[\frac{U(Y_{s,s})}{U(Y_{s-1,s})}\right] - \gamma \log\left(\frac{Y_{s,s}}{Y_{s-1,s}}\right)}{A(s)} - \frac{\left(\frac{Y_{s,s}}{s}\right)^\rho - \left(\frac{Y_{s-1,s}}{s}\right)^\rho}{\rho}\right] = 0. $$
Finally, by R\'enyi representation we can write
\begin{align*}
\left(\frac{Y_{s, s}}{s}, \frac{Y_{s-1,s}}{s}\right)
&\eqd \left(\frac{s^{-1}}{1 - \exp[-\tilde{E}_{1,s}]},\frac{s^{-1}}{1 - \exp[-\tilde{E}_{2,s}]}\right) \\
&\Rightarrow \left(\frac{1}{E_1}, \frac{1}{E_1 + E_2}\right),
\end{align*}
where $E_1$ and $E_2$ are independent standard exponential and the $\tilde{E}_{k, s}$ are exponential order statistics.  Uniform integrability holds, and so
$$\lims \frac{\EE\log \left[\frac{U(Y_{s,s})}{U(Y_{s-1,s})}\right] - \gamma \EE\log\left[\frac{E_1 + E_2}{E_1}\right]}{A(s)} = \EE\left[\frac{\left(\frac{1}{E_1}\right)^\rho - \left(\frac{1}{E_1 + E_2}\right)^\rho}{\rho} \right].$$
Writing $f_{\chi^2_p}$ for the density of the chi-squared distribution with $p$ degrees of freedom, the right-hand side expectation is
\begin{align*}
\int_0^\infty \frac{2f_{\chi^2_2}\left(2x\right)}{\rho \, x^\rho} \, dx - \int_0^\infty \frac{2f_{\chi^2_4}(2x)}{\rho \, x^\rho} \, dx 
&= \frac{1}{\rho}\Gamma(1 - \rho) - \frac{1}{\rho}\Gamma(2 - \rho) \\
&= \Gamma(1 - \rho).
\end{align*}
The desired conclusion follows by recalling that $A(t)$ is regularly varying of index $\rho$.
\endproof
\end{lemm}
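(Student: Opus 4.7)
The plan is to reduce to the Pareto setting by the quantile transform $X_k \eqd U(Y_k)$ where $Y_k$ has cdf $1 - 1/y$ on $[1,\infty)$, so that $\log Y_k$ is standard exponential. By the argument already used in Lemma \ref{lemm:hill}, the top spacing satisfies $\log(Y_{s,s}/Y_{s-1,s}) \eqd \mathrm{Exp}(1)$ (by R\'enyi representation) and is independent of $Y_{s-1,s}$, so $\gamma \EE[\log(Y_{s,s}/Y_{s-1,s})] = \gamma$ exactly. The target limit therefore reduces to analyzing the remainder
$$ \EE\left[\log \frac{U(Y_{s,s})}{U(Y_{s-1,s})} - \gamma \log\frac{Y_{s,s}}{Y_{s-1,s}}\right] \big/ A(\alpha s). $$

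To handle the remainder I would use the uniform refinement of the second-order condition due to Drees: for any $\varepsilon > 0$ there is a $t_0$ such that, uniformly for $x \geq 1$ and $t > t_0$,
$$ \left|\frac{\log U(tx) - \log U(t) - \gamma\log x}{A(t)} - \frac{x^\rho-1}{\rho}\right| \leq \varepsilon x^{\rho+\varepsilon}. $$
Plugging in $t = Y_{s-1,s}$, $tx = Y_{s,s}$, I get, on the event $\{Y_{s-1,s} > t_0\}$, an approximation of the remainder by $A(Y_{s-1,s}) \cdot ((Y_{s,s}/Y_{s-1,s})^\rho - 1)/\rho$ with a controlled error. Since $A$ is regularly varying of index $\rho$, I would combine $A(Y_{s-1,s}) \sim A(s)(Y_{s-1,s}/s)^\rho$ into the bracket to rewrite the principal term as $A(s) \cdot ((Y_{s,s}/s)^\rho - (Y_{s-1,s}/s)^\rho)/\rho$.

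Then I pass to the joint limit $(Y_{s,s}/s, Y_{s-1,s}/s) \Rightarrow (1/E_1, 1/(E_1+E_2))$ with $E_1, E_2$ i.i.d.\ standard exponential (again from R\'enyi). Granted uniform integrability, the expectation converges to
$$ \frac{\EE[E_1^{-\rho}] - \EE[(E_1+E_2)^{-\rho}]}{\rho} = \frac{\Gamma(1-\rho) - \Gamma(2-\rho)}{\rho} = \Gamma(1-\rho), $$
using $\Gamma(2-\rho) = (1-\rho)\Gamma(1-\rho)$ and the fact that $-\rho > 0$ makes these inverse moments finite. Dividing by $A(\alpha s)$ and using $A(s)/A(\alpha s) \to \alpha^{-\rho}$ produces the claimed $\Gamma(1-\rho)/\alpha^\rho$.

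The main obstacles are twofold. First, the Drees bound only holds for $Y_{s-1,s} > t_0$, so I must argue the complementary event $\{Y_{s-1,s} \leq t_0\}$ contributes negligibly after dividing by $A(\alpha s)$; this is where condition \eqref{eq:technical} enters (it controls $U(t)$ near $t = 1$, i.e., prevents pathological blow-up of $\log U(Y_{s-1,s})$ when $Y_{s-1,s}$ is small), via an auxiliary result in the spirit of the ``Lemma \ref{lemm:small}'' invoked in Lemma \ref{lemm:hill}. Second, upgrading the weak convergence $(Y_{s,s}/s, Y_{s-1,s}/s) \Rightarrow (1/E_1, 1/(E_1+E_2))$ to convergence of the $(\cdot)^\rho$-transformed expectation requires a uniform integrability argument: I would bound $\EE[(Y_{s,s}/s)^{2\rho}]$ and $\EE[(Y_{s-1,s}/s)^{2\rho}]$ by finite constants using R\'enyi representation and apply Cauchy--Schwarz, which is where the signed condition $\rho < 0$ (equivalently $-\rho > 0$) is used decisively to keep all inverse moments finite.
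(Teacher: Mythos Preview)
Your proposal is correct and follows essentially the same route as the paper: the quantile transform $X_k \eqd U(Y_k)$, the Drees uniform bound applied with $t = Y_{s-1,s}$ and $tx = Y_{s,s}$, the split on $\{Y_{s-1,s} > t_0\}$ handled via Lemma~\ref{lemm:small} and condition \eqref{eq:technical}, the replacement $A(Y_{s-1,s})\approx A(s)(Y_{s-1,s}/s)^\rho$, the joint limit $(Y_{s,s}/s,\,Y_{s-1,s}/s)\Rightarrow(1/E_1,\,1/(E_1+E_2))$, and the gamma evaluation $(\Gamma(1-\rho)-\Gamma(2-\rho))/\rho=\Gamma(1-\rho)$ all appear verbatim in the paper's argument. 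The only cosmetic difference is that the paper packages the approximation and moment-control steps as $L^2$ statements before invoking Cauchy--Schwarz, whereas you phrase them as uniform integrability; the content is the same.
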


\begin{lemm}
\label{lemm:variance}
Let $X_{1, s} \leq ... \leq X_{s, s}$ be independent order statistics drawn from a distribution $F$ with extreme value index $\gamma > 0$, satisfying \eqref{eq:technical} for some $\beta > 0$.
Then, writing
$$ \Psi_s(X) = \EE[\log X_{s, s} - \log X_{s-1, s} | X_1 = X], $$
we have:
\begin{align*}
&\lims s \Var[\Psi_s] = \frac{\gamma^2}{2}, \ \text{and, more generally,} \\
&\lims s \Cov[\Psi_s, \Psi_{\alpha s}] = \frac{\gamma^2}{1 + \alpha}
\end{align*}
for all $\alpha > 0$.
\proof
For convenience, write $W_i = \log X_i$. For $W_1, ..., W_s$, and $\tilde{W}_1, ..., \tilde{W}_{s+1}$ independent of each other,
\begin{align*}
\delta_s(\tilde w)
:&= \EE[\tilde{W}_{s+1, s+1} - \tilde{W}_{s, s+1}|\tilde{W}_1 = \tilde w] - \EE[{W}_{s,s} - {W}_{s-1, s}] \\
&= \EE[W_{s-1, s} - \tilde w; W_{s-1, s} < \tilde w < W_{s,s}] + \EE[\tilde w - 2W_{s,s} + W_{s-1,s}; W_{s,s} < \tilde w] \\
&= \EE[W_{s-1,s}; W_{s-1,s} < \tilde w] - 2\EE[W_{s,s}; W_{s,s} < \tilde w] \\
& \ \ \ + \tilde w \cdot \left(2\PP[W_{s,s} < \tilde w] - \PP[W_{s-1,s} < \tilde w]\right).
\end{align*}
Our goal is to study the distribution of $\delta_s(\log X)$ when $X \sim F$. We now proceed by evaluating each of these terms separately. As in the proof of Lemma \ref{lemm:bias},
\begin{align*}
&W_{s,s} - \log U(s) \Rightarrow - \gamma \log E_1 \ \text{and} \\
&W_{s-1,s} - \log U(s) \Rightarrow - \gamma \log (E_1 + E_2),
\end{align*}
where the $E_i$ are independent standard exponential random variables.

We can use the Potter bounds \eqref{eq:potter} and Lemma \ref{lemm:small} to show that the sequences $W_{s,s} - \log U(s)$ and $W_{s-1,s} - \log U(s)$ are uniformly integrable. This enables us to find the moments of interest from the limiting distributions. First, for all $w \in \RR$,
\begin{align*}
\lims &\EE[W_{s,s} - \log U(s); W_{s,s} - \log U(s) < w] \\
&= -\int_{e^\frac{-w}{\gamma}}^\infty \gamma \log (x) \cdot e^{-x} \ dx \\
&= w e^{-e^\frac{-w}{\gamma}} - \gamma \Gamma\left(0, e^\frac{-w}{\gamma}\right),
\end{align*}
where $\Gamma$ is the partial gamma function. Similarly,
\begin{align*}
\lims &\EE[W_{s-1,s} - \log U(s);W_{s-1,s} - \log U(s) < w] \\
&= - \int_{e^\frac{-w}{\gamma}}^\infty \gamma \log (x) \cdot x e^{-x} \ dx \\
&= w (1 + e^\frac{-w}{\gamma}) e^{-e^\frac{-w}{\gamma}} - \gamma \left[ e^{-e^\frac{-w}{\gamma}} + \Gamma\left(0, e^\frac{-w}{\gamma}\right)\right].
\end{align*}
Finally,
\begin{align*}
\lims &2\PP[W_{s,s} - \log U(s) < w] - \PP[W_{s-1,s} - \log U(s) < w] \\
&= 2\PP\left[E_1 > e^\frac{-w}{\gamma}\right] - \PP\left[E_1 + E_2 > e^\frac{-w}{\gamma}\right] \\
&= \left(1 - e^\frac{-w}{\gamma}\right)e^{-e^\frac{-w}{\gamma}}.
\end{align*}
Combining all our expressions, we find that
\begin{equation}
\label{eq:to_integrate}
\lims \delta_s(w + \log U(s)) = \gamma \cdot \left [\Gamma\left(0, e^\frac{-w}{\gamma}\right) - e^{-e^\frac{-w}{\gamma}} \right].
\end{equation}
It remains to find the distribution of
$$ z_s := \exp\left[ -\frac{\log X - \log U(s)}{\gamma}\right], $$
when $X$ is drawn from $F$. Now,
\begin{align*}
\lims s\PP[z_s < \chi]
&= \lims s \PP \left [X > U(s)\chi^{-\gamma} \right] \\
&= \lims s \PP \left [X > U \left (s\chi^{-1} \right) \right] \\
&= \chi,
\end{align*}
for any $\chi > 0$. Thus, if $\mu_{z_s}$ is the distribution of $z_s$, we find that $s \cdot \mu_{w_s}$ converges weakly to Lebesgue measure on compact intervals of $\RR_+$.

Now, by construction, we see that the functions $g_s(w) = \delta_s(w + \log U(s))$ must be Lipshitz continuous with constant 1 (since changing $\tilde{W}_1$ by $\Delta$ can change $\tilde{W}_{s + 1,s} - \tilde{W}_{s,s}$ by at most $\Delta$), and so the $g_s$ converge uniformly on compact intervals to $g$, where $g(w) = f\left(e^\frac{-w}{\gamma}\right)$ and
$$ f(z) = \gamma \cdot \left [\Gamma\left(0, z\right) - e^{-z} \right] $$
is the limiting function from \eqref{eq:to_integrate}. We can then argue by weak convergence of the $\mu_{z_s}$ to Lebesgue measure and by uniform convergence of $|g_s - g|$ to 0 that:
\begin{align*}
&\limc \lims s\EE\left[\delta_s(\log X); \left|\log\left[\frac{X}{U(s)}\right]\right| \leq c\right] = \limc \int_{e^{-c/\gamma}}^{e^{c/\gamma}} f(z) \ dz  = 0, \ \text{and}\\
&\limc \lims s\EE\left[\delta^2_s(\log X); \left| \log\left[\frac{X}{U(s)}\right] \right| \leq c\right] = \limc \int_{e^{-c/\gamma}}^{e^{c/\gamma}} f^2(z) \ dz = \frac{\gamma^2}{2}.
\end{align*}
It now remains to show uniform integrability of $s\delta_s^2$. Consider the residuals
$$ R_c = \lims s\EE\left[\delta^2_s(\log X); \left| \log\left[\frac{X}{U(s)}\right] \right| > c\right]. $$
By dominated convergence, if any one of the $R_c$ is finite, then $\limc R_c = 0$. Thus, the $R_c$ only have two possible limiting values: 0 or infinity. Now, by Hoeffding's inequality \citep{hoeffding1948class}, we know that
$$ s \Var[\Psi_s] \leq \Var[H_1^{(s)}] $$
for all $s$. Moreover, from Lemma \ref{lemm:hill}, we know that $\Var[H^{(s)}_1] \rightarrow \gamma^2$. Thus,
$$s \EE[\delta_s^2 (\log X)] = s \Var[\Psi_s] \leq \gamma^2$$
 for all $s$. This implies that the $R_c$ are also bounded by $\gamma^2$, and so must converge to zero; thus our stated result about variance holds.

More generally, for any $\alpha > 0$, we find that
\begin{align*}
\lims &s \Cov[\Psi_s, \Psi_{\alpha s}] \\
&= \gamma^2 \int_0^\infty \left (\Gamma(0, x) - e^{-x} \right) \cdot \left (\Gamma(0, \alpha x) - e^{-\alpha x} \right) \ dx \\
&= \gamma^2 \left[x\Gamma(0, x)\Gamma(0, \alpha x) - \frac{e^{-(\alpha + 1)x}}{\alpha + 1}\right]_{x = 0}^\infty \\
&= \frac{\gamma^2}{\alpha + 1}.
\end{align*}
\endproof
\end{lemm}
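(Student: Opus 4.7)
The plan is to realize $\Psi_s$ as the H\'ajek projection it plainly is, rescale $X$ onto the tail of $F$, and reduce the variance to an explicit integral. First I would rewrite the conditional expectation case by case: given $X_1 = X$, let $X'_{1,s-1} \leq \dots \leq X'_{s-1,s-1}$ be the order statistics of $X_2, \dots, X_s$, so that the top two of the full sample split into three cases depending on whether $X$ lies above $X'_{s-1,s-1}$, between $X'_{s-2,s-1}$ and $X'_{s-1,s-1}$, or below $X'_{s-2,s-1}$. Setting $\delta_s(w) := \Psi_s(e^w) - \EE[H_1^{(s)}]$ and taking expectations case by case expresses $\delta_s(w)$ as a finite sum of expectations in $X'_{s-1,s-1}$, $X'_{s-2,s-1}$ and $w$. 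A one-line perturbation argument (shifting $X_1$ by $\Delta$ moves $\log X_{s,s}-\log X_{s-1,s}$ by at most $\Delta$) also shows that $\delta_s$ is Lipschitz with constant $1$ in $w$, which will be crucial for interchanging limits later.

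Next I would rescale onto the extreme value tail. Introduce $z_s(X) := \exp(-(\log X - \log U(s))/\gamma)$. Using R\'enyi's representation \eqref{eq:renyi} together with regular variation of $U$,
\[
\bigl(\log X'_{s-1,s-1} - \log U(s),\ \log X'_{s-2,s-1} - \log U(s)\bigr) \Rightarrow \bigl(-\gamma \log E_1,\ -\gamma \log(E_1+E_2)\bigr)
\]
for independent standard exponentials $E_1, E_2$. Potter's inequality \eqref{eq:potter} together with the technical condition \eqref{eq:technical}, handled exactly as in Lemma \ref{lemm:hill}, promotes this to convergence of moments. Substituting into the case-by-case formula and evaluating the three expectations yields a pointwise limit $\delta_s\bigl(\log U(s) - \gamma \log z\bigr) \to f(z)$ with an explicit $f$, identifiable as $\gamma\bigl[\Gamma(0,z) - e^{-z}\bigr]$.

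The third step is to integrate against the pushforward of $F$. From $\PP[X > U(t)] = 1/t$ and regular variation, $s\,\PP[z_s(X) < \chi] = s\,\PP[X > U(s)\chi^{-\gamma}] \to \chi$ for each $\chi > 0$, so $s$ times the law of $z_s(X)$ converges weakly to Lebesgue measure on compact subintervals of $(0,\infty)$. Combined with the Lipschitz continuity and pointwise convergence from the previous steps, this gives
\[
s\,\EE\bigl[\delta_s^2(\log X);\ z_s(X) \in K\bigr] \longrightarrow \int_K f^2(z)\,dz
\]
for each compact $K \subset (0,\infty)$, and an elementary antiderivative computation confirms $\int_0^\infty f^2 = \gamma^2/2$. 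The covariance is entirely parallel: since $\log U(\alpha s) - \log U(s) \to \gamma \log \alpha$ by regular variation, $z_{\alpha s}(X) = \alpha\,z_s(X)(1+o(1))$, and the limit becomes $\int_0^\infty f(z)\,f(\alpha z)\,dz = \gamma^2/(\alpha+1)$.

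The delicate point I anticipate is uniform integrability of $s\,\delta_s^2(\log X)$, needed to upgrade the compact-$K$ limit above to a limit over all of $(0,\infty)$. The $z \to \infty$ (very small $X$) tail is immediate because $\delta_s$ vanishes there. The dangerous tail is $z \to 0$ (i.e., $X \gg U(s)$), which is a rare event but on which $\delta_s$ is unbounded. I would control it by the a priori bound $s\,\Var[\Psi_s] \leq \Var[H_1^{(s)}] \to \gamma^2$, a consequence of H\'ajek/Hoeffding orthogonality together with Lemma \ref{lemm:hill}: this forces any residual contribution from the $z \to 0$ tail to be finite, after which dominated convergence drives it to zero.
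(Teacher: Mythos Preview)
Your proposal is correct and follows essentially the same route as the paper's proof: case analysis for $\delta_s$, the $1$-Lipschitz observation, the rescaling $z_s = \exp(-(\log X - \log U(s))/\gamma)$, weak convergence of $s$ times its law to Lebesgue measure, pointwise (hence locally uniform) convergence to $f(z) = \gamma[\Gamma(0,z)-e^{-z}]$, the explicit integrals $\int f^2 = \gamma^2/2$ and $\int f(\cdot)f(\alpha\cdot) = \gamma^2/(1+\alpha)$, and the Hoeffding a priori bound $s\Var[\Psi_s]\le\Var[H_1^{(s)}]\to\gamma^2$ to close the uniform integrability gap.

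One small slip worth flagging: under your indexing (condition a size-$s$ sample and subtract $\EE[H_1^{(s)}]$), $\delta_s$ does \emph{not} vanish as $z\to\infty$; for $X$ below the second-largest of the remaining $s-1$ points, $\Psi_s(X)=\EE[H_1^{(s-1)}]$, so $\delta_s\to\EE[H_1^{(s-1)}]-\EE[H_1^{(s)}]\neq 0$. The paper sidesteps this by a one-step index shift, defining $\delta_s(\tilde w)=\EE[\tilde W_{s+1,s+1}-\tilde W_{s,s+1}\mid\tilde W_1=\tilde w]-\EE[W_{s,s}-W_{s-1,s}]$, so that the remaining $s$ points match the centering exactly and $\delta_s\equiv 0$ for small $\tilde w$. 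With your indexing the $z\to\infty$ tail is therefore not ``immediate,'' but since your Hoeffding bound controls the total $s\EE[\delta_s^2]$ it covers both tails at once and the argument still goes through.
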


\subsection{Proof of Main Results}

\begin{proof}[Proof of Lemma \ref{lemm:notation}]
Using our notation from \eqref{eq:max},
\begin{align*}
RBM_{k, n}
= &\ s \cdot \left(M(s) - M(s-1)\right) \\
= &\ s \cdot \binom{n}{s}^{-1} \sum_{\{A \subseteq [n]: |A| = s\}} \log\left(\max_{a \in A}\{X_a\}\right) \\
  &\ \ \ - s \cdot \frac{n - s + 1}{s} \cdot \binom{n}{s}^{-1} \sum_{\{B \subseteq [n]: |B| = s - 1\}} \log\left(\max_{b \in B}\{X_b\}\right) \\
= &\ \binom{n}{s}^{-1} \sum_{\{A \subseteq [n]: |A| = s\}} \Bigg[ s \cdot \log\left(\max_{a \in A}\{X_a\}\right) \\
  &\ \ \ - \sum_{\{B \subset A: |B| = s-1\}} \log\left(\max_{b \in B}\{X_b\}\right) \Bigg] \\
= &\ \binom{n}{s}^{-1} \sum_{\{A \subseteq [n]: |A| = s\}} H_1^{(s)}(X_A),
\end{align*}
To obtain the second-to-last line, we used the fact that each set $B$ of size $s-1$ is a subset of $n-s+1$ distinct sets of size $s$.
\end{proof}

\begin{proof}[Proof of Lemma \ref{lemm:anova}]
Without loss of generality, we can assume that the $g^{(n)}$ all have zero mean. By the Efron-Stein ANOVA decomposition \citep{efron1981jackknife}, for each $g^{(n)}$, there exist $j$-parameter symmetric functions $G_j^{(n)}$ with $j = 1, ..., s(n)$ such that
$$ g^{(n)}(X_1, ..., X_{s(n)}) = \sum_{j = 1}^{s(n)} \sum_{\{I_j \in [s(n)]:|I_j| = j\}} G_j^{(n)}\left(X_{I_j}\right), $$
and the $G_j^{(n)}\left(X_{I_j}\right)$ are all mean-zero and uncorrelated. Using this result, we can write our $U$-statistic as
$$ U_n = \binom{n}{s(n)}^{-1} \sum_{j = 1}^{s(n)} \binom{n - j }{ s(n) - j} \sum_{\{I_j \in [n]: |I_j| = j\}} G_j^{(n)}\left(X_{I_j}\right). $$
Moreover, under this notation,
$$G_1^{(n)}(X_1) = \EE\left[g^{(n)}(X_1, X_2, ..., X_{s(n)})|X_1\right],$$
and
\begin{equation}
\label{eq:rbm_to_hajek}
\widehat{U}_n = \binom{n}{s(n)}^{-1} \binom{n - 1 }{ s(n) - 1} \sum_{i = 1}^n G_1^{(n)}\left(X_i\right).
\end{equation}
Thus since the $G_j$ are uncorrelated and the $X_i$ are \emph{iid},
\begin{align*}
\EE\left[\left(U_n - \widehat{U}_n\right)^2\right]
&= \binom{n}{s(n)}^{-2} \sum_{j = 2}^{s(n)} \binom{n - j }{ s(n) - j}^2 \binom{n }{ j} \Var\left[G_j^{(n)}\right] \\
&\leq \frac{s(n)(s(n) - 1)}{n(n-1)} \sum_{j = 2}^{s(n)} \binom{s }{ j} \Var\left[G_j^{(n)}\right] \\
&\leq \frac{s(n)(s(n) - 1)}{n(n-1)} \Var\left[g^{(n)}\right],
\end{align*}
which implies the stated result, since $\Var\left[g^{(n)}\right] \leq C$ by hypothesis.
\end{proof}

\begin{proof}[Proof of Theorem \ref{theo:rbm}]
Let $s(n) =  2n/k(n) $ be the subsample block size. By Lemma \ref{lemm:hill},
$$ \lim_{s \rightarrow \infty} \Var\left[H_1^{(s)}\right] = \gamma^2. $$
Thus, by Lemma \ref{lemm:anova}, $RBM_{k(n), n}$ converges in mean square to its H{\'a}jek projection $\widehat{RBM}_{k(n), n}$, and
$$ \limn k(n) \cdot \EE\left[\left(\widehat{RBM}_{k(n), n} - RBM_{k(n), n}\right)^2\right] = 0, $$
because $k(n) \cdot \left(\frac{s(n)}{n}\right)^2 \sim 4/k(n)$ converges to zero. Moreover, as in \eqref{eq:rbm_to_hajek}, for any $a > 0$ we can write this projection as
\begin{equation}
\label{eq:rbm_proj}
\widehat{RBM}_{ak(n), n} = \frac{s(n)}{an} \sum_{i = 1}^n \EE\left[H_1^{(s(n)/a)}|X_i\right] - \left(\frac{s(n)}{a} - 1\right)\EE\left[H_1^{(s(n)/a)}\right].
\end{equation}
From Lemmas \ref{lemm:bias} and \ref{lemm:variance}, we get that for any $a, b > 0$,
\begin{align*}
& \limn \frac{\EE[H_1^{(s(n)/a)}] - \gamma}{A(s(n)/2)} = \left(\frac{a}{2}\right)^{-\rho}\Gamma(1 - \rho), \ \text{and} \\
& \limn s(n) \Cov\left[\EE\left(H_1^{(s(n)/a)}|X_1\right), \EE\left(H_1^{(s(n)/b)}|X_1\right)\right] = \frac{\gamma^2}{a^{-1} + b^{-1}},
\end{align*}
the second of which implies, together with \eqref{eq:rbm_proj}, that
\begin{equation*}
\label{eq:hajek_cov}
\limn k(n) \Cov\left[\widehat{RBM}_{ak(n), n}, \widehat{RBM}_{bk(n), n}\right] = \frac{2\gamma^2}{a + b}.
\end{equation*}
With these expressions in hand, we can conclude using the central limit theorem for triangular arrays and Slutsky's lemma that $RBM_{k(n), n}$ has the stated asymptotic distribution.
\end{proof}

\begin{proof}[Proof of Lemma \ref{lemm:convergence}.]
We already know from Theorem \ref{theo:rbm} that the finite dimensional distributions of the $X_n(t)$ converge in law to those of $X(t)$. Thus, to show that $X_n(t) \Rightarrow X(t)$ in $\mathcal{D}_{[a,b]}$ for some $0 < a < b$, it suffices by, e.g., Theorem 15.6 of \citet{billingsley1968convergence} to show that there is a constant $C$ such that, given any $\varepsilon > 0$, there exists a constant $N_\varepsilon$ such that for all $t_1, t_2 \in [a,b]$ with $|t_1 - t_2| < \varepsilon$ and for all $n \geq N_\varepsilon$,
\begin{equation}
\label{eq:moment_bound}
\EE\left[\left(X_n(t_2) - X_n(t_1)\right)^2\right] \leq C \varepsilon^2.
\end{equation}
To show such a bound, it is useful to decompose our expression:
\begin{align*}
\EE\left[\left(X_n(t_2) - X_n(t_1)\right)^2\right] \leq
\ &\EE\left[\left(X_n(t_2) - \widehat{X}_n(t_2)\right)^2\right] \\
&+\Var\left[\widehat{X}_n(t_2) - \widehat{X}_n(t_1)\right] \\
&+\EE\left[X_n(t_2) - X_n(t_1)\right]^2 \\
&+\EE\left[\left(\widehat{X}_n(t_1) - X_n(t_1)\right)^2\right], \\
\end{align*}
where $\widehat{X}$ is a H{\'a}jek projection of $X$ as defined in \eqref{eq:hajek}. It now remains to bound the terms individually.

By Lemma \ref{lemm:anova}, the first and the last summands decay uniformly as $O(1/k(n))$ on $[a, b]$, and so become eventually negligible for any $\varepsilon > 0$. Meanwhile, as in \eqref{eq:rbm_proj}, we can write the variance term (i.e. the second summand) as
$$ \frac{4n}{t_1 t_2 k(n)}
\Var\left[\EE\left(H_1^{\left(\frac{2n}{k(n)t_2}\right)} - H_1^{\left(\frac{2n}{k(n)t_1}\right)}|X_1\right)\right], $$
which by Lemma \ref{lemm:variance} converges to
$$ \frac{\gamma^2(t_2 - t_1)^2}{t_1t_2(t_1 + t_2)} \leq \frac{\gamma^2}{2a^3} (t_2 - t_1)^2 $$
on $[a, b]$; the result can be extended to show that the convergence is uniform over the interval. Finally, Lemma \ref{lemm:bias} reduces the problem of showing that $\EE[X_n(t)]$ satisfies the required property to showing that $a_n(t) = A(2n/tk(n))$ satisfies it; this latter task can be performed using the Potter bounds. Thus \eqref{eq:moment_bound} holds.
\end{proof}

\begin{proof}[Proof of Theorem \ref{theo:process}]
It is well known \citep[e.g.,][]{loeve1948fonctions} that, in order for a continuous-time stochastic process to have an almost surely $\cc^\infty$ modification, it is sufficient for the covariance function $C(\tau_1, \tau_2)$ to be infinitely differentiable along the diagonal $s_1 = s_2$. We thus immediately get the desired smoothness result, since $\Cov[R(\tau_1), R(\tau_2)] = \frac{2\gamma^2}{e^{\tau_1} + e^{\tau_2}}$ is smooth on $\RR^2$. The same result tells us that, for any $l, l' \in \NN$,
\begin{equation}
\label{eq:deriv}
\Cov\left[R^{(l)}(\tau), R^{(l')}(\tau)\right] = \frac{\partial^{l+l'}}{\partial u^{l}\partial v^{l'}} \Cov\left[R(u), R(v)\right] \bigg|_{u = v = \tau},
\end{equation}
which gives us the stated covariance result. The joint normality of $R$ and $R'$ and the expectation result follow directly from \eqref{eq:process}.
\end{proof}

\begin{proof}[Proof of Theorem \ref{theo:thresh}]
Recall that
$$ {\htau} = \argmin_\tau Z_\tau, \text{ where } Z_\tau := \left(R_\tau'\right)^2 + \frac{\gamma^2}{2e^\tau}; $$
we need to verify that $\htau$ is almost surely finite. To do so, it suffices to show that
\begin{equation}
\label{eq:goal}
\lim_{\tau \rightarrow \pm \infty} Z_\tau \stackrel{\text{a.s.}}{=} +\infty.
\end{equation}
Now, $Z_\tau$ goes deterministically to infinity as $\tau \rightarrow - \infty$, so we only need to check the $\tau \rightarrow + \infty$ limit. In order to verify that $Z_\tau$ goes to infinity it suffices to verify that $R_\tau'$ does. 

By using the same argument as in the proof of Theorem \ref{theo:process}, we see that $R'_\tau$ is a Gaussian process with moments
$$ \EE\left[R_\tau'\right] = -\rho \, 2^\rho\lambda\Gamma(1 - \rho) e^{-\rho\tau} \text{ and } \Cov\left[R_\sigma', \, R_\tau'\right] = 4 \gamma^2\frac{e^{\sigma + \tau}}{\left(e^{\sigma} + e^{\tau}\right)^3}. $$
Because $\rho < 0$, we see that $\lim_{\tau \rightarrow +\infty} \EE\left[R_\tau'\right] = +\infty$. It is easy to verify that $R_\tau' - \EE\left[R_\tau'\right]$ must visit the $[-1, 1]$ interval infinitely many times as $\tau \rightarrow +\infty$. Thus, by continuity of $R_\tau'$, if we show that
\begin{equation}
\label{eq:cross}
\lim_{T \rightarrow \infty} \PP\Big[ R_\tau' - \EE\left[R_\tau'\right]  = -1 \text{ for some } \tau \geq T\Big] = 0,
\end{equation}
we can conclude that \eqref{eq:goal} holds.

For convenience, let $\tRpt = R'_\tau - \EE\left[R'_\tau\right]$, and let $N_T$ be the number of times $\tRpt$ crosses -1 for $\tau \geq T$. From Rice's formula \citep[see][]{adler2007random}, we know that
\begin{align*}
\EE\left[N_T\right]
&= \int_T^\infty \EE\left[\left|\tRpt' \right| \Big | \tRpt = -1\right] f_{\tRpt}(-1) \ d\tau\\
&\leq \int_T^\infty \sqrt{\EE\left[\left(\tRpt' \right)^2 \Big | \tRpt = -1\right]} f_{\tRpt}(-1) \ d\tau,
\end{align*}
where $f_{\tRpt}$ is the marginal density of $\tRpt$ and the second line is an application of Jensen's inequality. The relation \eqref{eq:deriv} implies that
$$ \begin{pmatrix} \tRpt \\ \tRpt' \end{pmatrix} \sim \nn \left(0, \, \frac{\gamma^2}{2e^\tau} \, \begin{pmatrix} 1 & -1/2 \\ -1/2 & 1\end{pmatrix}\right), \text{ and so }
\mathcal{L}\left(\tRpt' \Big| \tRpt = -1\right) = \nn\left(\frac{1}{2}, \, \frac{3}{8}\frac{\gamma^2}{e^\tau}\right).$$
Thus,
$$\EE\left[\left(\tRpt' \right)^2 \bigg | \tRpt = -1\right] = \frac{1}{4} + \frac{3}{8}\frac{\gamma^2}{e^\tau} \leq 1 $$
for all large enough values of $\tau$, meaning that for large enough $T$
$$ \EE\left[N_T\right] \leq \int_T^\infty \frac{\sqrt{2} \, e^{\tau/2}}{\gamma} \, \varphi\left(-\frac{\sqrt{2} \, e^{\tau/2}}{\gamma}\right) \ d\tau, $$
where $\varphi$ is the standard normal density. This integral converges, and so by Markov's inequality \eqref{eq:cross} holds.
\end{proof}

\subsection{Uniform Integrability}

We end with a technical lemma that we have used repeatedly to make uniform integrability arguments.

\begin{lemm}
\label{lemm:small}

Let $X_{1, s} \leq ... \leq X_{s, s}$ be independent order statistics drawn from a distribution $F$ of strictly positive support with extreme value index $\gamma > 0$. Then, for any fixed $k$ and finite $C$,
$$ \lim_{s \rightarrow \infty} \PP\left[X_{s-k, s} < C\right] = 0. $$
Moreover, if there is a constant $\beta > 0$ such that
$$ \lim_{x \rightarrow 0} F(x) \cdot x^{-\frac{1}{\beta}} = 0, $$
then, for any $\nu > 0$,
\begin{equation}
\label{eq:small}
\lim_{s \rightarrow \infty} \EE\left[\big|\log[X_{s-k, s}] \big|^\nu\cdot 1\{X_{s-k, s} < C\}\right] = 0.
\end{equation}

\proof
As in the proof of Lemma \ref{lemm:hill}, we write $X_k \eqd U(Y_k)$. Because $U(t) \rightarrow \infty$, the first statement follows directly by applying the strong law of large numbers to $1\{Y_k > r\}$ for a properly chosen $r > 0$. To prove the second part, we see that
$$\EE\left[\log[X_{s-k, s}]_+^\nu \big| X_{s-k, s} < C\right] \leq \log[C]_+^\nu$$
is uniformly bounded, and we already know that $\PP\left[X_{s-k, s} < C\right]$ converges to zero. The hard part of establishing \eqref{eq:small} is thus to establish a uniform bound for $\EE\left[\log[X_{s-k, s}]_-^\nu \big| X_{s-k, s} < C\right]$.

Now, because $\lim_{x \rightarrow 0} F(x) = 0$,
\begin{align*}
\lim_{x \rightarrow 0} F(x) \cdot x^{-\frac{1}{\beta}} = 0
&\iff \lim_{x \rightarrow 0} \left(\frac{F(x)}{1 - F(x)}\right)^\beta \cdot x^{-1} = 0 \\
&\iff \lim_{y \rightarrow 0} y^\beta \cdot U(1 + y)^{-1} = 0,
\end{align*}
where we obtain the last equivalence by writing $y = \frac{F(x)}{1 - F(x)}.$

Without loss of generality, we picked $C$ such that $C = U(t_0)$ for some $t_0$. Because $U(t)$ is monotone increasing, we can find a constant $L$ such that
$$ \left(t - 1\right)^\beta \cdot U(t)^{-1} \leq L $$
for all $1 < t \leq t_0$. (Without loss of generality, let $L = 1$.) This implies that
\begin{align*}
\EE\left[\log[X_{s-k, s}]_-^\nu \big| X_{s-k, s} < C\right]
&\eqd \EE\left[\log[U(Y_{s-k, s})]_-^\nu \big| Y_{s-k, s} < t_0\right] \\
&\leq \beta^\nu \cdot \EE\left[\log\left[Y_{s-k, s} - 1\right]_-^\nu \big| Y_{s-k, s} < t_0\right].
\end{align*}
Now, for any $\alpha \in \RR$, 
$$ \PP[Y_{s-k, s} < \alpha] \leq \binom{s}{s-k} \PP[Y_1 < \alpha]^{s-k}, $$
and so given a fixed $k$ we can pick $s_k$ such that
$$ \PP[Y_{s-k, s} < \alpha] < \PP[Y_1 < \alpha] \text{ for all } 1 \leq \alpha \leq 2 \text{ and } s \geq s_k.$$
Thus, in order to prove our result, it suffices to show that
$$ \EE\left[\log[Y_1 - 1]_-^\nu \big| Y_1 < t_0\right] < \infty.$$
This last expression is just
$$ \int_{1}^{\min\{2, t_0\}} \frac{(-\log (y - 1))^\nu}{y^2} \ dy \ \bigg / \int_{1}^{\min\{2, t_0\}} \frac{dy}{y^2}, $$
which can be shown by calculus to be finite for any $\nu > 0$.
\endproof

\end{lemm}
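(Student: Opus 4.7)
The plan is to use the standard representation $X_j \eqd U(Y_j)$ where the $Y_j$ are \emph{iid} with cdf $F_Y(y) = 1 - 1/y$ on $(1, \infty)$, so that order statistics of the $X_j$ are monotone transforms of those of the $Y_j$. Since the tail index $\gamma > 0$ forces $U(t) \to \infty$ as $t \to \infty$, the first claim reduces to showing that for any fixed $r > 0$, the probability that fewer than $k+1$ of the $Y_j$ exceed $r$ tends to zero. This follows immediately from the strong law of large numbers applied to the Bernoulli variables $1\{Y_j > r\}$, whose expected sum grows linearly in $s$.

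For the second claim, I would split $|\log X_{s-k,s}|^\nu$ into its positive and negative parts. On the event $\{X_{s-k,s} < C\}$, the positive part is deterministically bounded by $[\log C]_+^\nu$, so its contribution to the expectation vanishes by part one. The substantive task is therefore to control $\EE[[\log X_{s-k,s}]_-^\nu \cdot 1\{X_{s-k,s} < C\}]$, which measures how much mass $F$ can place arbitrarily close to zero and simultaneously appear among the top order statistics.

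To handle this, I would translate the hypothesis $\lim_{x \to 0} F(x) x^{-1/\beta} = 0$ into a statement about $U$ near $t = 1$ via the reparametrization $y = F(x)/(1 - F(x))$, obtaining $\lim_{y \to 0^+} y^\beta / U(1+y) = 0$. Since $U$ is monotone and $C = U(t_0)$ for some $t_0$, this yields a constant $L$ with $(t-1)^\beta \le L \cdot U(t)$ for all $1 < t \le t_0$; absorbing $L$ gives the pointwise bound $[\log X_{s-k,s}]_-^\nu \le \beta^\nu [\log(Y_{s-k,s} - 1)]_-^\nu$ on the event in question.

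It then suffices to show that the conditional expectation $\EE[[\log(Y_{s-k,s}-1)]_-^\nu \mid Y_{s-k,s} < t_0]$ stays bounded uniformly in $s$, since multiplying by $\PP[X_{s-k,s} < C] \to 0$ closes the argument. For this, the crude union bound $\PP[Y_{s-k,s} < \alpha] \le \binom{s}{s-k} \PP[Y_1 < \alpha]^{s-k}$ dominates the conditional distribution of $Y_{s-k,s}$ on $\{Y_{s-k,s} < t_0\}$ by that of $Y_1$ on $\{Y_1 < t_0\}$ once $s$ is large, reducing the problem to the finiteness of $\int_1^{\min(2, t_0)} (-\log(y-1))^\nu y^{-2}\, dy$, a routine calculus check exploiting the integrability of $(-\log u)^\nu$ near $u = 0$. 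The main obstacle I anticipate is step three: carefully translating the small-$x$ hypothesis on $F$ into a uniform lower bound on $U$ near $t = 1$ without losing control over the constants when $t_0$ is near $1$.
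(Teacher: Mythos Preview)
Your proposal is correct and follows essentially the same route as the paper's proof: the same $U(Y)$ representation and SLLN argument for part one, the same positive/negative split of $\log X_{s-k,s}$, the same reparametrization $y = F(x)/(1-F(x))$ to convert the hypothesis on $F$ into a lower bound on $U$ near $t=1$, the same union bound on $\PP[Y_{s-k,s} < \alpha]$ to dominate by the law of $Y_1$, and the same reduction to the finiteness of $\int_1^{\min(2,t_0)} (-\log(y-1))^\nu y^{-2}\,dy$. The obstacle you flag in step three is exactly the step the paper handles, and your outline of it matches theirs.
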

\end{document}